\definecolor{lightergray}{gray}{0.8}
\definecolor{darkgreen}{rgb}{0.0, 0.6, 0.0}
\newcommand{\algskip}{\vspace{3pt}}    
\renewcommand\orcidID[1]{\orcidlink{#1}}
\DeclareMathAlphabet{\pazocal}{OMS}{zplm}{m}{n}
\newcommand{\TOWER}{\textsf{TOWER}}
\renewcommand{\C}{\mathcal{C}}
\renewcommand{\S}{\mathcal{S}}
\newcommand{\calP}{\pazocal{P}}
\newcommand{\tr}{\mapsto}
\newcommand{\sem}[1]{\llbracket#1\rrbracket}
\newcommand{\disa}[1]{\text{dis}(#1)}
\renewcommand{\sup}[1]{\text{sup}(#1)}
\renewcommand{\inf}[1]{\text{inf}(#1)}
\newcommand{\dead}[1]{\text{dead}(#1)}
\newcommand{\enab}{\textit{enab}}
\newcommand{\Leaves}{\pazocal{L}}
\newcommand{\Bottom}{\pazocal{B}}
\newcommand{\Epres}{Presburger}
\def\Next{{\mathbf{X}}}
\def\Until{\mathbin{\mathbf{U}}}
\DeclareMathOperator{\Down}{
	\mathchoice{\downarrow}
	{\downarrow}
	{\downarrow \!}
	{\downarrow \!}
}
\DeclareMathOperator{\Up}{
	\mathchoice{\uparrow}
	{\uparrow}
	{\uparrow \!}
	{\uparrow \!}
}
\newcommand{\true}{\textit{true}}
\newcommand{\Dead}{\textit{Dead}}
\newcommand{\Stage}{\pazocal{S}}
\newcommand{\ms}[1]{\langle #1 \rangle}
\newcommand{\AY}{\text{A}_\text{Y}}
\newcommand{\AN}{\text{A}_\text{N}}
\newcommand{\PY}{\text{P}_\text{Y}}
\newcommand{\PN}{\text{P}_\text{N}}
\newcommand{\DeathCert}{\mathit{DeathCert}}
\newcommand{\Init}{{\mathcal{I}}}
\begin{document}
\title{Checking Qualitative Liveness Properties of Replicated Systems with Stochastic Scheduling\thanks{
Michael Blondin is supported by a Discovery Grant from the Natural
Sciences and Engineering Research Council of Canada (NSERC) and by the
Fonds de recherche du Qu\'{e}bec -- Nature et technologies (FRQNT).
Javier Esparza, Martin Helfrich and Philipp J. Meyer have received funding from the European Research Council (ERC) under the European Union's Horizon 2020 research and innovation programme under grant agreement No~787367 (PaVeS).
Anton\'{\i}n Ku\v{c}era is supported by the Czech Science Foundation, grant No.~18-11193S.}}
\titlerunning{Checking Qualitative Liveness Properties of Stochastic Replicated Systems}
%
\author{Michael Blondin\inst{1}\orcidID{0000-0003-2914-2734} \and
Javier Esparza\inst{2}\orcidID{0000-0001-9862-4919} \and
Martin Helfrich\inst{2}\orcidID{0000-0002-3191-8098} \and \\
Anton\'{\i}n Ku\v{c}era\inst{3}\orcidID{0000-0002-6602-8028} \and
Philipp J. Meyer\inst{2}\orcidID{0000-0003-1334-9079}}
\authorrunning{M. Blondin et al.}
\institute{
Universit\'{e} de Sherbrooke, Sherbrooke, Canada \\
\email{michael.blondin@usherbrooke.ca} \and
Technical University of Munich, Munich, Germany \\
\email{\{esparza,helfrich,meyerphi\}@in.tum.de}
\and
Masaryk University, Brno, Czechia \\
\email{tony@fi.muni.cz}
}
%
\maketitle              
\begin{abstract}
We present a sound and complete method for the verification of 
qualitative liveness properties of replicated systems under stochastic scheduling. 
These are systems consisting of a finite-state program, executed by an unknown number of indistinguishable agents, where the next agent to make a move is determined by the result of a 
random experiment. We show that if a property of such a system holds, then there is always a witness in the shape of a \emph{Presburger stage graph}: a finite graph whose nodes are Presburger-definable sets of configurations.
Due to the high complexity of the verification problem (non-elementary), we introduce an incomplete procedure for the construction of Presburger stage graphs, and implement it on top of an SMT solver.
The procedure makes extensive use of the theory of well-quasi-orders, and of the structural theory of Petri nets and vector addition systems. We apply our results to a set of benchmarks, in particular to a large collection of population protocols, a model of distributed computation extensively studied by the distributed computing community.

\keywords{parameterized verification \and liveness \and stochastic systems.}
\end{abstract}

\section{Introduction}\label{sec:introduction}
Replicated systems consist of a fully symmetric finite-state program executed by an unknown number of indistinguishable agents, communicating by rendez-vous or via shared variables~\cite{BaslerMWK09,BJKK+15,GS92,KaiserKW10}.
Examples include distributed protocols and multithreaded programs, or abstractions thereof.
The communication graph of replicated systems is a clique. They are a special class of \emph{parameterized systems}, \ie, infinite families of systems that admit a finite description in some suitable modeling language. In the case of replicated systems, the (only) parameter is the number of agents executing the program.

Verifying a replicated system amounts to proving that an infinite family of systems satisfies a given property. This is already a formidable challenge, made even harder by the fact that we want to verify liveness (more difficult than safety) against stochastic schedulers. Loosely speaking, stochastic schedulers select the set of agents that should execute the next action as the result of a random experiment. Stochastic scheduling often appears in distributed protocols, and in particular also in population protocols---a model much studied in distributed computing with applications in computational biology\footnote{Under the name of \emph{chemical reaction networks}.}---that supplies many of our case studies~\cite{AADFP06,NB15}. Under stochastic scheduling, the semantics of a replicated system is an infinite family of finite-state Markov chains. In this work, we study \emph{qualitative} liveness properties, stating that the infinite runs starting at configurations of the system satisfying a precondition almost surely reach and stay in configurations satisfying a postcondition. In this case, whether the property holds or not depends only on the topology of the Markov chains, and not on the concrete probabilities.

We introduce a formal model of replicated systems, based on multiset rewriting, where processes can communicate by shared variables or multiway synchronization. We present a sound and complete verification method called \emph{Presburger stage graphs}. A Presburger stage graphs is a directed acyclic graphs with Presburger formulas as nodes. A formula represents a possibly infinite inductive set of configurations, i.e., a set of configurations closed under reachability. A node $\S$ (which we identify with the set of configurations it represents) has the following property: A run starting at any configuration of $\S$ almost surely reaches some configuration of some successor $\S'$ of $\S$, and, since $\S'$ is inductive, get trapped in $\S'$. A stage graph labels  the node $\S$ with a witness of this property in the form of a \emph{Presburger certificate}, a sort of ranking function expressible in Presburger arithmetic. The completeness of the technique, i.e., the fact that for every property of the replicated system that holds there exists a stage graph proving it, follows from deep results of the theory of vector addition systems (VASs)~\cite{Ler12,Ler13,Ler13b}.

Unfortunately, the theory of VASs also shows that, while the verification problems we consider are decidable, they have non-elementary computational complexity~\cite{CLLLM19}. As a consequence, verification techniques that systematically explore the space of possible stage graphs for a given property are bound to be very inefficient. For this reason, we design an incomplete but efficient algorithm for the computation of stage graphs. Inspired by theoretical results, the algorithm combines a solver for linear constraints with some elements of the theory of well-structured systems~\cite{ACJT96,FS01}. We report on the performance of this algorithm for a large number of case studies. In particular, the algorithm automatically verifies many standard population protocols described in the literature \cite{AGV15,AADFP04,BEJ18,BEJM17,BEK18,CMS10,CDFS11}, as well as liveness properties of distributed algorithms for leader election and mutual exclusion \cite{MutexDijkstra,MutexArray,LeaderHerman,IsraeliJ90,MutexLehmannRabin,MutexBurns,Peterson81,MutexSzymanski}.

\noindent \paragraph{Related work.}   The parameterized verification of replicated systems was first studied in~\cite{GS92}, where they were modeled as counter systems. This allows one to apply many efficient techniques \cite{ALW16,BFHH17,ELMMN14,KKW14}. Most of these works are inherently designed for safety properties, and some can also handle fair termination~\cite{EsparzaM15}, but none of them handles stochastic scheduling. To the best of our knowledge, the only works studying parameterized verification of liveness properties under our notion of stochastic scheduling are those on verification of population protocols. For \emph{fixed} populations,
protocols can be verified with standard probabilistic model checking~\cite{BK08,Var85}, and early works follow this approach~\cite{CMS10,CDFS11,PLD08,SLDP09}. Subsequently, an algorithm
and a tool for the \emph{parameterized} verification of population
protocols were described in~\cite{BEJ18b,BEJM17}, and a first version
of stage graphs was introduced in~\cite{BEK18} for analyzing the
expected termination time of population protocols. In this paper we
overhaul the framework of~\cite{BEK18} for liveness verification,
drawing inspiration from the safety verification technology
of~\cite{BEJ18b,BEJM17}. Compared to~\cite{BEJ18b,BEJM17}, our
approach is not limited to a specific subclass of protocols,
and captures models beyond population protocols. Furthermore, our new
techniques for computing Presburger certificates  subsume
the procedure of~\cite{BEJM17}. In comparison to~\cite{BEK18}, we
provide the first completeness and complexity results for stage
graphs. Further, our stage graphs can prove correctness of population
protocols and even more general liveness properties, while those of~\cite{BEK18} can only
prove termination. We also introduce novel techniques for computing stage graphs,
which compared to~\cite{BEK18} can greatly reduce their size
and allows us to prove more examples correct.

There is also a large body of work on parameterized verification via cutoff techniques: one shows that a specification holds for any number of agents if{}f it holds for any number of agents below some threshold called the cutoff (see~\cite{ARZS15,CGB89,CTTV04,EN03,KaiserKW10}, and~\cite{BJKK+15} for a comprehensive survey). Cut-off techniques can be applied to systems with an array or ring communication structure, but they require the existence and effectiveness of a cutoff, which is not the case in our setting. Further parameterized verification techniques are regular
model checking~\cite{Abd12,BJNT00} and automata learning~\cite{Angluin87}. The classes of communication structures they can handle are orthogonal to ours: arrays and rings for
regular model checking and automata learning, and cliques in our work. Regular model checking and learning have recently been employed to verify safety properties~\cite{ChenHLR17}, liveness properties under arbitrary schedulers~\cite{LR16} and termination under finitary fairness~\cite{LLMR17}. The classes of schedulers considered in~\cite{LLMR17,LR16} are incomparable to ours: arbitrary schedulers in~\cite{LR16}, and finitary-fair schedulers in~\cite{LLMR17}. Further, these works are based on symbolic state-space exploration, while our techniques are based on automatic construction of invariants and ranking functions~\cite{BJKK+15}.

\section{Preliminaries}\label{sec:preliminaries}
Let $\N$ denote $\{0, 1, \ldots\}$ and let $E$ be a finite set. A
\emph{unordered vector} over $E$ is a mapping $V \colon E \to \Z$. In
particular, a \emph{multiset} over $E$ is an unordered vector $M
\colon E \to \N$ where $M(e)$ denotes the number of occurrences of $e$
in $M$. The sets of all unordered vectors and multisets over $E$ are
respectively denoted $\Z^E$ and $\N^E$. Vector addition, subtraction and
comparison are defined componentwise. The \emph{size} of a
multiset $M$ is denoted $\size{M} = \sum_{e \in E} M(e)$.
We let $E^{\ms{k}}$ denote the set of all multisets over $E$ of size $k$.
We sometimes describe multisets using a set-like notation, \eg
$M = \multiset{f, g, g}$ or equivalently $M = \multiset{f, 2 \cdot g}$ is
such that $M(f) = 1$, $M(g) = 2$ and $M(e) = 0$ for all $e \not\in \{f, g\}$.

\paragraph{\Epres{} arithmetic.}\label{sec-presburger-formulas}
Let $X$ be a set of variables.
The set of formulas of \emph{\Epres{} arithmetic} over $X$ is the result of closing atomic formulas, as defined in the next sentence, under Boolean operations and first-order existential quantification. Atomic formulas are of the form
$\sum_{i=1}^k a_i x_i \sim b$, where $a_i$ and $b$ are integers, $x_i$ are variables and
$\sim$ is either $<$ or $\equiv_m$, the latter denoting the congruence modulo $m$ for any $m \ge 2$.
Formulas over $X$ are interpreted on $\N^X$. Given a formula $\phi$ of \Epres{} arithmetic, we let
$\sem{\phi}$ denote the set of all multisets satisfying $\phi$. A set $E \subseteq \N^X$ is a \emph{Presburger set} if $E = \sem{\phi}$ for some formula $\phi$.

\subsection{Replicated systems}
A \emph{replicated system} over $Q$ of arity $n$ is a tuple $\calP = (Q,T)$,
where $T \subseteq \bigcup_{k=0}^n Q^{\ms{k}} \times Q^{\ms{k}}$ is a \emph{transition relation}
containing the set of \emph{silent} transitions $\bigcup_{k=0}^n \{ (\vec{x}, \vec{x}) \mid \vec{x} \in Q^{\ms{k}}) \}$\footnote{
In the paper, we will omit the silent transitions when giving replicated systems.}.
A \emph{configuration} is a multiset $C$ of states, which we interpret as a global state with $C(q)$ agents in each state $q \in Q$.

For every $t = (\vec{x}, \vec{y}) \in T$ with $\vec{x} =
\multiset{X_1, X_2, \ldots, X_k}$ and $\vec{y} = \multiset{Y_1, Y_2,
  \ldots, Y_k}$, we write $X_1 X_2 \cdots X_k \tr Y_1 Y_2 \cdots Y_k$
and let $\pre{t} \defeq \vec{x}$, $\post{t} \defeq \vec{y}$ and $\effect{t} \defeq
\post{t} - \pre{t}$. A transition $t$ is \emph{enabled} at a configuration $C$ if $C \geq \pre{t}$ and, if so, can \emph{occur}, leading to the configuration $C' = C + \effect{t}$. If $t$ is not enabled at $C$, then we say that it is \emph{disabled}. We use the following reachability notation:
\newcommand{\padtrans}[1]{\trans{\mathmakebox[6pt][c]{#1}}}
\begin{align*}
  C \padtrans{t} C'
  &\iff \text{$t$ is enabled at $C$ and its occurrence leads to } C', \\[-3pt]
  C \padtrans{} C'
  &\iff C \trans{t} C' \text{ for some } t \in T, \\[-3pt]
  C \padtrans{w} C' &\iff C = C_0 \trans{w_1}
  C_1 \cdots \trans{w_n} C_n = C' \text{ for some } C_0,
  C_1, \ldots, C_n \in \N^Q, \\[-3pt]
  C \padtrans{*} C' &\iff C \trans{w} C' \text{ for some } w \in T^*.
\end{align*}
Observe that, by definition of transitions, $C \trans{} C'$ implies $\size{C} = \size{C'}$, and likewise for $C \trans{*} C'$.
Intuitively, transitions cannot create or destroy agents.

A \emph{run} is an infinite sequence $C_0 t_1 C_1 t_2 C_2 \cdots$ such
that $C_i \trans{t_{i+1}} C_{i+1}$ for every $i \geq 0$. Given $L \subseteq T^*$ and a set
of configurations $\mathcal{C}$, we let
\begin{align*}
  \postC_L(\C) &\defeq \{C': C \in \C, w \in L, C \trans{w} C'\}, &
  \postC^*(\C) &\defeq \postC_{T^*}(\C), \\
  \preC_L(\C) &\defeq \{C : C' \in \C, w \in L, C \trans{w} C'\}, &
  \preC^*(\C) &\defeq \preC_{T^*}(\C).
\end{align*}

\paragraph{Stochastic scheduling.} We assume that, given a configuration $C$, a probabilistic scheduler picks one of the transitions enabled at $C$. We only make the following two assumptions about the random experiment determining the transition: first, the probability of a transition
depends only on $C$, and, second, every transition enabled at $C$ has a
nonzero probability of occurring. Since $C \trans{*} C'$ implies $\size{C} = \size{C'}$, the number of
configurations reachable from any configuration $C$ is finite. Thus, for every configuration $C$, the semantics of
$\calP$ from $C$ is a finite-state Markov chain rooted at $C$.

\begin{example}
\label{ex:majority}
Consider the replicated system $\calP = (Q,T)$ of arity 2 with states $Q = \{ \AY, \allowbreak \AN, \allowbreak \PY, \PN \}$ and transitions $T = \{ t_1, t_2, t_3, t_4 \}$, where
\begin{alignat*}{5}
    t_1 &\colon & \AY \, \AN &\tr \PY \, \PN, &\qquad\qquad
    t_2 &\colon & \AY \, \PN &\tr \AY \, \PY, \\
    t_3 &\colon & \AN \, \PY &\tr \AN \, \PN, &\qquad\qquad
    t_4 &\colon & \PY \, \PN &\tr \PN \, \PN. \\[-17pt]
\end{alignat*}
\noindent
Intuitively, at every moment in time, agents are either \emph{Active} or \emph{Passive}, and have output \emph{Yes} or \emph{No}, which corresponds to the four states of $Q$.
This system is designed to satisfy the following property: for every configuration $C$ in which all agents are initially active, \ie, $C$ satisfies $C(\PY)=C(\PN)=0$, if $C(\AY) > C(\AN)$, then eventually all agents stay forever in the ``yes'' states $\{\AY, \PY\}$, and otherwise all agents eventually stay forever in the ``no'' states $\{\AN, \PN\}$. \defqed
\end{example}

\subsection{Qualitative model checking}

Let us fix a replicated system $\calP = (Q,T)$. Formulas of \emph{linear temporal logic (LTL)} on $\calP$ are defined by the following grammar:
\[
\varphi ::= \phi \mid \neg \varphi \mid \varphi \lor \varphi \mid \varphi \land \varphi \mid
            \Next \varphi \mid \varphi \Until \varphi
\]
\noindent where $\phi$ is a \Epres{} formula over $Q$. We look at $\phi$ as an atomic proposition over the set $\Nat^Q$ of configurations. Formulas of LTL are interpreted over runs of $\calP$ in the standard way. We abbreviate $\Diamond \varphi \equiv \true \Until \varphi$ and $\Box \varphi \equiv \neg \Diamond \neg \varphi$.

Let us now introduce the probabilistic interpretation of LTL. A configuration $C$ of $\calP$ satisfies an
LTL formula $\varphi$ \emph{with probability $p$} if $\Pr[C, \varphi] = p$, where $\Pr[C, \varphi]$ denotes the probability of the set of runs of $\calP$ starting at $C$ that satisfy $\varphi$ in the finite-state Markov chain rooted at $C$. The measurability of this set of runs for every $C$ and $\varphi$ follows from well-known results~\cite{Var85}.
The \emph{qualitative model checking problem} consists of, given an LTL formula $\varphi$ and a set of configurations $\Init$, deciding whether $\Pr[C , \varphi] =  1$ for every $C \in \Init$.  We will often work with the complement problem, \ie, deciding whether $\Pr[C, \neg\varphi] > 0$ for some \(C\in\Init\).

In contrast to the action-based qualitative model checking problem of~\cite{EsparzaGLM16}, our version of the problem is undecidable due to adding atomic propositions over configurations (see~\Cref{app:preliminaries} for a proof):

\begin{restatable}{theorem}{qualundecidable}\label{thm:qual-undecidable}
  The qualitative model checking problem is not semi-decidable.
\end{restatable}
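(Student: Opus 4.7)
The strategy is to reduce from the non-halting problem for deterministic two-counter Minsky machines, which is $\Pi^0_1$-complete and hence not semi-decidable. Given such a machine $M$, I build a replicated system $\calP$, an LTL formula $\varphi$ with \Epres{} atomic propositions, and a \Epres-definable initial set $\Init$ such that $M$ does not halt iff $\Pr[C,\varphi]=1$ for every $C\in\Init$. Combined with the non-r.e.\ character of non-halting, this yields the theorem.

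\textbf{Construction.} I simulate $M$ inside $\calP$ by encoding each counter $c_i$ as the number of agents in a dedicated state $c_i$, with an auxiliary pool of agents in a state $\text{idle}$ that can be converted to counter tokens and back. A single control agent carries the program counter through states matching the instructions of $M$. Increments use transitions of the form $q_{\text{inc}}+\text{idle}\tr q'+c_i$, and the decrement branch of each test-and-decrement instruction uses $q+c_i\tr q_2+\text{idle}$, which requires $c_i\geq 1$. The zero branch is simulated by an \emph{unconditional} transition $q\tr q_1$ that the scheduler may fire even when $c_i>0$ -- a cheating zero test. I choose each target $q_1$ of a zero branch to be fresh and reachable only via that transition, so that the \Epres{} predicate $\text{bad}_i \defeq (q_1\geq 1 \land c_i\geq 1)$ holds at a reachable configuration iff cheating has just occurred at the $i$-th zero test. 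Setting $\text{bad}\defeq\bigvee_i\text{bad}_i$ and making $q_H$ absorbing (only silent transitions enabled whenever $q_H\geq 1$), I define $\varphi\defeq(\Box\neg\text{bad})\Rightarrow(\Box\, q_H=0)$. Finally, $\Init$ is the \Epres{} set of configurations with one agent in $q_0$, $x_j$ agents in each input counter $c_j$, zero agents in any other control or counter state, and an arbitrary number of agents in $\text{idle}$.

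\textbf{Correctness and main obstacle.} A run satisfies $\neg\varphi\equiv(\Box\neg\text{bad})\land\Diamond(q_H\geq 1)$ precisely when it faithfully simulates $M$ and eventually reaches the halting state. If $M$ halts in $k$ steps with counters bounded by $B$, then for any $C\in\Init$ with $C(\text{idle})\geq B$ the canonical faithful simulation is a finite sequence of enabled transitions of positive probability in the Markov chain rooted at $C$; thereafter the chain is trapped at a configuration with $q_H\geq 1$ that satisfies $\neg\text{bad}$, so $\Pr[C,\neg\varphi]>0$. Conversely, if $M$ does not halt, then from every $C\in\Init$ the faithful simulation either runs forever (when $M$'s counters remain bounded) or stalls at an increment once the idle pool is exhausted -- in neither case reaching $q_H$ -- so no run contributes to $\Pr[C,\neg\varphi]$ and $\Pr[C,\varphi]=1$. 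The main obstacle is that LTL atomic propositions range over configurations rather than transitions, so cheating, which is intrinsically a property of which transition fires, must be witnessed purely by the multiset contents of the post-configuration; the freshness of each target $q_1$ is exactly what turns $\text{bad}_i$ into a faithful configuration-level cheat detector, and it is this ability to inspect counter values inside atomic propositions -- absent from the action-based fragment of~\cite{EsparzaGLM16} -- that yields the undecidability.
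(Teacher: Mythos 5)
Your proof is correct and follows essentially the same route as the paper's: a reduction from non-halting of two-counter Minsky machines, simulated by a single control agent with counters encoded as agent counts, where zero-tests may "cheat" but cheating is detected by a Presburger predicate on the configuration reached immediately after the cheating step, and the property has the form $\Diamond \mathit{bad} \lor \Box(q_H = 0)$. The only differences are bookkeeping: the paper uses per-counter slack states $Z_j$ with an explicit \emph{Overflow} disjunct (and keeps all transitions binary, which also gives the arity-2 strengthening mentioned in the appendix), whereas you use a shared idle pool and let an exhausted increment simply deadlock the simulation, which is equally sound for the theorem as stated.
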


It is known that qualitative model checking problems of finite-state probabilistic systems reduces to model checking of non-probabilistic systems under an adequate notion of fairness.

\begin{definition}
A run of a replicated system $\calP$ is \emph{fair} if for every possible step $C \trans{t} C'$ of $\calP$ the following holds: if the run contains infinitely many occurrences of $C$, then it also contains infinitely many occurrences of $C \, t \, C'$.
\end{definition}
So, intuitively, if a run can execute a step infinitely often, it eventually will. It is readily seen that a fair run of a finite-state transition system eventually gets ``trapped'' in one of its bottom strongly connected components, and visits each of its states infinitely often. Hence, fair runs of a finite-state Markov chain have probability one. The following proposition was proved in \cite{EsparzaGLM16} for a model slightly less general than replicated systems; the proof can be generalized without effort:

\begin{proposition}[{\cite[Prop.~7]{EsparzaGLM16}}]
Let $\calP$ be a replicated system, let $C$ be a configuration of $\calP$, and let $\varphi$ be an LTL formula. It is the case that $\Pr[C, \varphi] =  1$ if{}f every fair run of $\calP$ starting at $C$ satisfies $\varphi$.
\end{proposition}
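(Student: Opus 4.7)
My plan is to exploit the fact that the reachable configurations from $C$ form a finite set, since $C \trans{*} C'$ implies $\size{C} = \size{C'}$. Hence the stochastic semantics from $C$ is a finite-state Markov chain $M_C$, and the proposition reduces to a standard dichotomy for finite Markov chains.

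I would first establish, as a lemma, that almost every run of $M_C$ is fair. The standard argument is: in a finite Markov chain every trajectory almost surely reaches a bottom strongly connected component (BSCC) and, once trapped there, visits every state and takes every enabled outgoing transition infinitely often (a Borel--Cantelli-style computation using that each enabled transition has bounded-below positive probability on the finite state space). By the definition of fairness in the excerpt, these are precisely the fair runs, so $\Pr[C, \mathrm{fair}] = 1$. This immediately yields the easy direction: if every fair run from $C$ satisfies $\varphi$, then $\Pr[C,\varphi] \geq \Pr[C,\mathrm{fair}] = 1$.

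For the other direction I would argue by contrapositive. Assume that some fair run $\rho$ starting at $C$ does not satisfy $\varphi$. Since $M_C$ is finite, only finitely many Presburger atomic subformulas of $\varphi$ have distinct truth values across configurations, so I can translate $\varphi$ into a deterministic Rabin automaton $\mathcal{A}$ over that finite alphabet of valuations. Form the product Markov chain $M_C \times \mathcal{A}$; the lift $\hat{\rho}$ of $\rho$ is itself a fair run of this finite product, and, because $\rho \not\models \varphi$, its tail is trapped in a rejecting BSCC $\hat{B}$. Let $\hat{\pi}$ be a finite prefix of $\hat{\rho}$ that has already entered $\hat{B}$. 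The cylinder of runs extending $\hat{\pi}$ has positive probability, and conditionally on this cylinder almost every continuation remains in $\hat{B}$ and is fair inside it, so it violates the Rabin condition and therefore violates $\varphi$. This gives $\Pr[C,\neg\varphi] > 0$, contradicting $\Pr[C,\varphi] = 1$.

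The main obstacle is this $\Rightarrow$ direction: a single bad fair run must be ``inflated'' into a positive-measure set of bad runs. A direct combinatorial argument on $M_C$ is awkward because whether $\varphi$ holds on a fair run depends both on the finite prefix and on the structure of the BSCC reached, and two fair runs in the same BSCC may disagree on subformulas that speak about specific orderings of visits. The product construction with a deterministic $\omega$-automaton is the cleanest way around this: it bundles prefix information and the pending obligations of $\varphi$ into a single finite state space, so that ``$\rho \models \varphi$'' becomes the finite-state property ``$\hat{\rho}$ reaches an accepting BSCC'', and the inflation step is then just the elementary fact that every finite prefix of a Markov chain trajectory has positive probability.
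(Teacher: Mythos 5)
Your $\Leftarrow$ direction is fine and is exactly the observation the paper itself makes right before the statement (fair runs of a finite-state Markov chain have probability one); note also that the paper does not reprove this proposition but imports it from \cite{EsparzaGLM16}. The gap is in your $\Rightarrow$ direction, at the step ``the lift $\hat{\rho}$ of $\rho$ is itself a fair run of this finite product, and \ldots its tail is trapped in a rejecting BSCC.'' Fairness is defined on configurations of $\calP$, not on product states: knowing that every step $C \to C'$ enabled at an infinitely-often-visited configuration $C$ is taken infinitely often does \emph{not} imply that, for a fixed automaton state $q$, every product step leaving $(C,q)$ is taken infinitely often --- the infinitely many occurrences of the step $C \to C'$ may all happen while the automaton is in states other than $q$. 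Hence $\hat{\rho}$ need not be fair in the product, and its infinity set is only a strongly connected set of product states, not necessarily a bottom SCC; it can have outgoing edges that $\hat{\rho}$ simply never takes. Your inflation step then collapses: continuations of a prefix reaching that set need not stay in it, and the BSCC they almost surely do reach may well be accepting.

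This is not a presentational issue but a genuine flaw, because your argument uses only that $\varphi$ is recognized by a deterministic Rabin automaton, and at that level of generality the statement is false under this notion of fairness. Take states $A,B$ with transitions $A \mapsto B$ and $B \mapsto A$ (plus the silent self-loops required by the definition), started from the single-agent configuration $\ms{A}$, and the $\omega$-regular property ``infinitely often the agent is in $A$ and the number of configurations with the agent in $B$ seen so far is odd.'' The product with the parity-tracking deterministic automaton is one single BSCC, so the property holds with probability $1$; yet the run $\ms{A}\,\ms{A}\,\ms{B}\,\ms{B}\,\ms{A}\,\ms{A}\,\ms{B}\,\ms{B}\cdots$ is fair (every enabled step from both configurations occurs infinitely often) and visits $A$ only when the number of past $B$-configurations is even, so it violates the property; its lift is trapped in the non-bottom strongly connected set $\{(A,\mathrm{even}),(B,\mathrm{odd}),(B,\mathrm{even})\}$, exactly the situation your proof excludes without justification. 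Consequently, any correct proof of the $\Rightarrow$ direction must exploit something specific to LTL (its non-counting/aperiodic character), e.g.\ by analyzing which LTL formulas can distinguish two fair runs trapped in the same BSCC of the configuration graph, rather than treating $\varphi$ as an arbitrary $\omega$-regular objective; this is what the cited proof in \cite{EsparzaGLM16} has to do, and what is missing here.
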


We implicitly use this proposition from now on. In particular, we
define:

\begin{definition}
A configuration $C$ \emph{satisfies $\varphi$ with probability 1}, or just \emph{satisfies $\varphi$}, if every fair run starting at $C$ satisfies $\varphi$, denoted by $C \models \varphi$. We let $\sem{\varphi}$ denote the set of configurations satisfying $\varphi$. A set $\C$ of configurations \emph{satisfies} $\varphi$ if $\C \subseteq \sem{\varphi}$, \ie, if $C \models \varphi$ for every $C \in \C$.
\end{definition}

\paragraph{Liveness specifications for replicated systems.}\label{sec-properties}

We focus on a specific class of temporal properties for which the qualitative model checking problem is decidable and which is large enough to formalize many important
specifications. Using well-known automata-theoretic technology, this class can also be used to verify all properties describable in action-based LTL, see e.g. \cite{EsparzaGLM16}.

A \emph{stable termination property} is given by a pair
$\Pi = (\phipre, \Phipost)$, where $\Phipost = \{\phipost^1, \ldots, \phipost^k\}$ and $\phipre, \phipost^1, \ldots, \phipost^k$ are \Epres{} formulas over $Q$ describing sets of configurations.
Whenever $k = 1$, we sometimes simply write $\Pi = (\phipre, \phipost)$. The pair $\Pi$ induces the LTL property
$$
\varphi_\Pi \defeq \Diamond \bigvee_{i=1}^k \Box \phipost^i.
$$
Abusing language, we say that a replicated system $\calP$ \emph{satisfies $\Pi$} if $\sem{\phipre} \subseteq \sem{\varphi_\Pi}$, that is, if every configuration $C$ satisfying $\phipre$ satisfies $\varphi_\Pi$ with probability 1.

\noindent The \emph{stable termination problem} is the qualitative model checking problem for
$\Init = \sem{\phipre}$ and $\varphi = \varphi_\Pi$ given by a stable termination property $\Pi = (\phipre, \Phipost)$.

\begin{example}\label{ex:majority-property}
  Let us reconsider the system from \Cref{ex:majority}. We can
  formally specify that all agents will eventually agree on the
  majority output \emph{Yes} or \emph{No}. Let $\Pi^{\text{Y}} =
  (\phipre^{\text{Y}}, \phipost^{\text{Y}})$ and $\Pi^{\text{N}} =
  (\phipre^{\text{N}}, \phipost^{\text{N}})$ be defined by:
  \begin{align*}
     \phipre^{\text{Y}} &= (\AY > \AN \land \PY + \PN = 0), &
    \phipost^{\text{Y}} &= (\AN + \PN = 0), \\
     \phipre^{\text{N}} &= (\AY \le \AN \land \PY + \PN = 0), &
    \phipost^{\text{N}} &= (\AY + \PY = 0).
  \end{align*}
  The system satisfies the property specified in \Cref{ex:majority} if{}f it satisfies
  $\Pi^{\text{Y}}$ and $\Pi^{\text{N}}$. As an alternative (weaker)
  property, we could specify that the system always stabilizes to
  either output by $\Pi = (\phipre^{\text{Y}} \lor \phipre^{\text{N}},
  \{ \phipost^{\text{Y}}, \phipost^{\text{N}} \})$. \defqed
\end{example}

\section{Stage graphs}\label{sec:stagegraphs}
In the rest of the paper, we fix a replicated system $\calP = (Q,T)$ and a stable termination property $\Pi = (\phipre, \Phipost)$, where $\Phipost = \{\phipost^1, \ldots, \phipost^k\}$, and address the problem of checking whether $\calP$ satisfies $\Pi$. We start with some basic definitions on sets of configurations.

\begin{definition}[inductive sets, leads to, certificates]
\leavevmode
\begin{itemize}
\item A set of configurations $\C$ is \emph{inductive} if $C \in \C$ and $C \rightarrow C'$ implies $C' \in \C$.

\item Let $\C, \C'$ be sets of configurations. We say that
$\C$ \emph{leads to} $\C'$, denoted $\C \leadsto \C'$, if for all $C \in \C$, every fair run from $C$ eventually visits a configuration of~$\C'$.

\item A \emph{certificate} for $\C \leadsto \C'$
is a function $f \colon \C \rightarrow \Nat$ satisfying that for every $C \in \C \setminus \C'$, there exists an execution $C \trans{*} C'$ such that $f(C) > f(C')$.
\end{itemize}
\end{definition}

Note that certificates only require the existence of some executions
decreasing $f$, not for all of them to to decrease it. Despite this,
we have:

\begin{restatable}{proposition}{propCert}\label{prop:cert}
For all inductive sets $\C, \C'$ of configurations, it is the case that: $\C$ leads to $\C'$ if{}f there exists a certificate for $\C \leadsto \C'$.
\end{restatable}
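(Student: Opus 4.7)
The plan is to prove the two directions separately, using the crucial finiteness observation noted in the preliminaries: since transitions preserve multiset size, the set of configurations reachable from any fixed $C$ is finite, so from $C$ the system behaves as a finite-state Markov chain, and every fair run eventually gets trapped in some bottom strongly connected component (BSCC) of that chain.

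For the $(\Rightarrow)$ direction I would construct the obvious certificate: define $f(C)$ to be the length of a shortest execution $C \trans{*} D$ with $D \in \C'$, setting $f(C) = 0$ when $C \in \C'$. The key fact to justify is that $f$ is well-defined on all of $\C$. Indeed, if $C \in \C$ had no reachable configuration in $\C'$, then a fair run from $C$ could never visit $\C'$ (since every configuration in the run is reachable from $C$), contradicting $\C \leadsto \C'$; so some finite witness execution to $\C'$ exists, and its shortest length is $f(C) \in \N$. For $C \in \C \setminus \C'$ with $f(C) = k \geq 1$, any shortest witness $C \trans{} C_1 \trans{*} D$ gives $f(C_1) \leq k-1 < f(C)$, so the single-step execution $C \trans{} C_1$ fulfils the certificate condition.

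For the $(\Leftarrow)$ direction, suppose $f$ is a certificate but, for contradiction, that some fair run $C_0, C_1, C_2, \ldots$ with $C_0 \in \C$ never visits $\C'$. Inductivity of $\C$ gives $C_i \in \C$ for all $i$, hence $C_i \in \C \setminus \C'$. By finiteness of $\postC^*(C_0)$, the run is a fair run of a finite Markov chain and thus eventually stays inside some BSCC $B$, visiting every state of $B$ infinitely often; clearly $B \subseteq \C \setminus \C'$. The critical observation, which is what makes the ``some execution'' (rather than ``every execution'') formulation of certificates sufficient, is that bottomness of $B$ together with inductivity gives $\postC^*(C) \subseteq B$ for every $C \in B$. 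Now pick $C^\star \in B$ minimising $f$ on the finite set $B$. By the certificate property there exists $C^\star \trans{*} D$ with $f(D) < f(C^\star)$, but $D \in \postC^*(C^\star) \subseteq B$, contradicting minimality.

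The only delicate step is the BSCC argument in $(\Leftarrow)$; a naive reading of the certificate definition might suggest the existentially quantified ``some'' execution is too weak to force progress along an arbitrary fair run. The trick is to defer to fairness only to the extent of concluding that the run is trapped in a BSCC, and then exploit that any execution out of a state in the BSCC stays inside the BSCC, so the witness execution promised by the certificate cannot escape the region where $f$ is already minimal. Everything else (well-definedness of the shortest-path $f$, the one-step decrease) is routine given the finite-reachability property already established for $\calP$.
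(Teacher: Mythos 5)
Your proof is correct, and for the harder direction it follows a genuinely different route from the paper's. The forward direction is essentially the paper's argument: the paper uses the even simpler certificate that assigns $1$ on $\C \setminus \C'$ and $0$ on $\C \cap \C'$, while you build the shortest-distance function; both rest on the same (equally implicit in the paper) observation that every configuration of $\C$ admits some finite execution into $\C'$, and your one-step decrease via inductivity of $\C$ is fine. For the converse, the paper proceeds in two steps: it first shows by infinite descent on $f$ (no infinite strictly decreasing chain in $\N$) that every configuration of $\C$ can reach $\C'$, and then proves, by induction on the length of such a witness execution and invoking fairness at each step, that a fair run which visits some configuration infinitely often must visit $\C'$. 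You instead fold the fairness reasoning into the standard fact, which the paper itself states as ``readily seen,'' that a fair run of a finite-state system is eventually trapped in a bottom SCC and visits all its states infinitely often, and you then obtain the contradiction by taking a configuration minimizing $f$ on that BSCC and noting that the certificate's witness execution cannot leave the BSCC (inductivity guarantees the BSCC lies in $\C \setminus \C'$, so $f$ is defined there). The ingredients are the same -- finiteness of the reachable set, inductivity, well-foundedness of $\N$, and the fairness induction hidden inside the BSCC fact -- but your decomposition is more compact and makes explicit why the merely existential decrease in the certificate definition is enough, whereas the paper's version is self-contained in its use of fairness instead of appealing to the BSCC property as a black box.
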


The proof, which can be found in~\Cref{app:stagegraphs}, depends on two properties of replicated systems with stochastic scheduling. First, every configuration has only finitely many descendants. Second, for every fair run and for every finite execution $C \trans{w} C'$, if $C$ appears infinitely often in the run, then the run contains infinitely many occurrences of  $C \trans{w} C'$. We can now introduce stage graphs:

\begin{definition}[stage graph]\label{def:SG}
  A \emph{stage graph} of $\calP$ for the property
  $\Pi$ is a directed
  acyclic graph whose nodes, called \emph{stages}, are sets of
  configurations satisfying the following conditions:
  \begin{enumerate}
  \item every stage is an inductive set;\label{st1}

  \item every configuration of $\sem{\phipre}$ belongs to some
    stage;\label{st2}

  \item if $\C$ is a non-terminal stage with successors $\C_1, \ldots,
    \C_n$, then there exists a certificate for $\C \leadsto (\C_1 \cup
    \cdots \cup \C_n)$;\label{st3}

  \item if $\C$ is a terminal stage, then $\C \models
    \phipost^i$ for some $i$.\label{st4}
  \end{enumerate}
\end{definition}

The existence of a stage graph implies that $\calP$ satisfies $\Pi$. Indeed, by conditions~\ref{st2}--\ref{st3} and repeated application of Proposition~\ref{prop:cert}, every run starting at a configuration of $\sem{\phipre}$ eventually reaches a terminal stage, say $\C$, and, by condition~\ref{st1}, stays in $\C$ forever. Since, by condition~\ref{st4}, all configurations of $\C$ satisfy some $\phipost^i$, after its first visit to $\C$ every configuration satisfies $\phipost^i$.

\begin{figure}[t]
\begin{center}
\pgfdeclarelayer{background}
\pgfdeclarelayer{foreground}
\pgfsetlayers{background,main,foreground}
\begin{tikzpicture}[auto, thick, scale=0.8, transform shape, font=\large, xshift=3.6cm, yshift=0cm]

\tikzset{every node/.style={anchor=north}}

  \node[](n5) {\begin{tabular}{c}$\AY > \AN$\\ {\footnotesize Cert.: $\AY+\AN$}\end{tabular}} ;
  \node[](n6) [below =0.7cm of n5] {\begin{tabular}{c}$\AY > 0, \AN=0$\\ {\footnotesize Cert.: $\PN$}\end{tabular}};
  \node[](n7) [below =0.7cm of n6] {$\underline{\AN + \PN=0}$};

  \path[->, thick]
  (n5) edge node[left]{} (n6)
  (n6) edge node[left]{} (n7);

  \node[](b1) [above =0.2cm of n5] {Stage graph for $\Pi^{\text{Y}}$};

  \begin{pgfonlayer}{background}
    \filldraw [line width=4mm,join=round,orange!8]
      (b1.north  -| b1.west)  rectangle (n7.south  -| b1.east);
  \end{pgfonlayer}

  \node[](n0) [right=2.0cm of n5]{\begin{tabular}{c}$\AY \leq \AN$, $\PY = 0 \lor \AN + \PN > 0$\\ {\footnotesize Cert.: $\AY+\AN$}\end{tabular}} ;
  \node[](n1) [below left=0.7cm and -3.1cm of n0, anchor=north east] {\begin{tabular}{c}$\AY = 0, \AN > 0$\\ {\footnotesize Cert.: $\PY$}\end{tabular}};
  \node[](n2) [below right=0.7cm and -3.1cm of n0, anchor=north west] {\begin{tabular}{c} $\AY + \AN=0, \PN > 0$\\ {\footnotesize Cert.: $\PY$}\end{tabular}};
  \node[](n3) [below =2.6cm of n0] {$\underline{\AY + \PY=0}$};

  \node[](inter) [below=1.5cm of n0] {};

  \path[->, thick]
  (n0) edge node[above]{} (n1)
  (n0) edge node[above]{} (n2)
  (n1) edge node[left]{}  (n3)
  (n2) edge node[left]{}  (n3)
  (n0) edge node[left]{}  (n3);

  \node[](b0) [above =0.2cm of n0] {Stage graph for $\Pi^{\text{N}}$};
  \begin{pgfonlayer}{background}
    \filldraw [line width=4mm,join=round,orange!8]
      (b0.north  -| n1.west)  rectangle (n3.south  -| n2.east);
  \end{pgfonlayer}

\end{tikzpicture}
\end{center}
\caption{Stage graphs for the system of \Cref{ex:majority}.}
\label{fig-stage-graph}
\end{figure}

\begin{example}
\Cref{fig-stage-graph} depicts stage graphs for the system of \Cref{ex:majority} and the properties defined in \Cref{ex:majority-property}.
The reader can easily show that every stage $\C$ is inductive by checking that for every $C \in \C$ and every transition $t \in \{t_1, \ldots, t_4\}$ enabled at $C$, the step
$C \trans{t_i} C'$ satisfies $C' \in \C$. For example, if a configuration satisfies $\AY > \AN$, so does any successor configuration. \defqed
\end{example}

The following proposition shows that stage graphs are a sound and complete technique for proving stable termination properties.

\begin{restatable}{proposition}{propComp}\label{propo:comp}
  System $\calP$ satisfies $\Pi$ if{}f it has a stage graph for $\Pi$.
\end{restatable}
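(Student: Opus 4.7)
The proof splits into two directions. The forward direction---that existence of a stage graph implies $\calP \models \Pi$---is essentially the informal argument already sketched below \Cref{def:SG}. I would formalise it by induction on the length of the longest path from a stage to a terminal one, tacitly assuming the stage graph is finite. In the base case the stage $\C$ is terminal: by inductiveness (condition (1)) any fair run entering $\C$ stays there, and by condition (4) the whole subsequent suffix satisfies some $\phipost^i$, yielding $\Diamond \Box \phipost^i$. In the inductive step, \Cref{prop:cert} together with conditions (1) and (3) guarantees that every fair run from a non-terminal stage reaches some successor stage, to which the induction hypothesis applies; condition (2) then ensures that every configuration in $\sem{\phipre}$ is covered.

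For the completeness direction, assuming $\calP \models \Pi$, I would construct a finite stage graph with $k+1$ nodes: a root stage $\C_0 \defeq \sem{\varphi_\Pi}$ and one terminal stage $\C_i \defeq \sem{\Box \phipost^i}$ for each $i \in \{1, \ldots, k\}$, with an edge from $\C_0$ to each $\C_i$. Conditions (1), (2) and (4) are straightforward. Inductiveness of $\C_0$ and of each $\C_i$ holds because every fair run from a one-step successor $C'$ can be extended to a fair run from $C$ by prepending the step $C \trans{} C'$, and the formulas $\varphi_\Pi$ and $\Box \phipost^i$ are preserved under dropping the first configuration. Condition (2) is exactly the hypothesis $\calP \models \Pi$, and (4) is immediate since $\Box \phipost^i$ implies $\phipost^i$.

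The main obstacle is condition (3), i.e.\ producing a certificate for $\C_0 \leadsto \C_1 \cup \cdots \cup \C_k$. By \Cref{prop:cert} (applicable because a union of inductive sets is inductive), it is enough to prove the leads-to relation itself. Here I would exploit that $C \trans{*} C'$ preserves size, so the Markov chain rooted at any $C \in \C_0$ is finite-state. A fair run from $C$ almost surely gets trapped in some bottom strongly connected component $B$, visiting every state of $B$ infinitely often. Because $C \models \varphi_\Pi$, the run eventually stays inside some $\sem{\phipost^i}$; combined with visiting every state of $B$ infinitely often, this forces $B \subseteq \sem{\phipost^i}$. Since $B$ is closed under reachability (being a BSCC), every $C' \in B$ satisfies $\Box \phipost^i$, so $B \subseteq \C_i$. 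Hence the fair run eventually enters $\C_1 \cup \cdots \cup \C_k$, which establishes $\C_0 \leadsto \bigcup_i \C_i$ and completes the argument.
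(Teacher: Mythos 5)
Your proof is correct and follows essentially the same route as the paper: soundness by propagating Proposition~\ref{prop:cert} along the (finite, acyclic) stage graph, and completeness by exhibiting a $(k{+}1)$-stage graph whose terminal stages are $\sem{\Box\phipost^i}$. The only differences are cosmetic: the paper takes the root stage to be $\postC^*(\sem{\phipre})$ rather than your $\sem{\varphi_\Pi}$ (which makes inductiveness and condition~(2) immediate), and your BSCC argument spells out the step--left implicit in the paper--that a fair run satisfying $\Diamond\Box\phipost^i$ actually visits a configuration of $\sem{\Box\phipost^i}$ (noting that for a fair run this trapping is certain, not merely almost sure).
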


Proposition \ref{propo:comp} does not tell us anything about the decidability of the stable termination problem. To prove that the problem is decidable, we introduce \Epres{} stage graphs. Intuitively these are stage graphs whose stages and certificates can be expressed by formulas of \Epres{} arithmetic.

\begin{definition}[Presburger stage graphs]
\leavevmode
\begin{itemize}
    \item A stage $\C$ is \emph{\Epres{}} if $\C = \sem{\phi}$ for some \Epres{} formula $\phi$.

    \item A \emph{bounded certificate} for $\C \leadsto \C'$ is a pair $(f,k)$, where $f \colon \C \to \Nat$ and $k \in \N$, satisfying that for every $C \in \C \setminus \C'$, there exists an execution $C \trans{w} C'$ such that $f(C) > f(C')$ and $\size{w} \le k$.

    \item A \emph{\Epres{} certificate} is a bounded certificate $(f,k)$ satisfying $f(C)= n \iff \varphi(C,n)$ for some \Epres{} formula $\varphi(\vec{x}, y)$.
    \item A \emph{\Epres{} stage graph} is a stage graph whose  stages and certificates are all \Epres{}.
\end{itemize}
\end{definition}

Using a powerful result from~\cite{EGLM17}, we show that: (1) $\calP$ satisfies $\Pi$ if{}f it has a \Epres{} stage graph for $\Pi$ (\Cref{thm:stagegraph}); (2) there exists a denumerable set of candidates for a \Epres{} stage graph for $\Pi$; and (3) there is an algorithm that decides whether a given candidate is a \Epres{} stage graph for $\Pi$ (\Cref{thm:stagegraphdec}). Together, (1--3) show that the stable termination problem is semi-decidable. To obtain decidability, we observe that the complement of the stable termination problem is also semi-decidable. Indeed, it suffices to enumerate all initial configurations $C \models \phipre$, build for each such $C$ the (finite) graph $G_C$ of configurations reachable from $C$, and check if some bottom strongly connected component
$\Bottom$ of $G_C$ satisfies $\Bottom \not\models \phipost^i$ for all $i$. This is the case if{}f some fair run starting at $C$ visits and stays in $\Bottom$, which in turn is the case if{}f $\calP$ violates~$\Pi$.

\begin{restatable}{theorem}{thmStageGraph}\label{thm:stagegraph}
 System $\calP$ satisfies $\Pi$ if{}f it has a \Epres{} stage graph for $\Pi$.
\end{restatable}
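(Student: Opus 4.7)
The ``if'' direction is immediate: a \Epres{} stage graph is in particular a stage graph, so Proposition~\ref{propo:comp} shows that $\calP$ satisfies $\Pi$. My plan for the nontrivial ``only if'' direction is to build an explicit \Epres{} stage graph with just two layers: a single non-terminal top stage together with one terminal stage per post-condition $\phipost^i$. Specifically, I would take
\[
  \C_0 \;\defeq\; \postC^*(\sem{\phipre})
  \qquad \text{and} \qquad
  \C_i \;\defeq\; \{C \in \N^Q : \postC^*(\{C\}) \subseteq \sem{\phipost^i}\}
  \;\text{ for }\; i \in \{1,\dots,k\},
\]
and check the four conditions of Definition~\ref{def:SG}. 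Each $\C_j$ is inductive by construction ($\C_0$ as a forward-closure, each $\C_i$ as an ``always-stays-in'' set). Clearly $\sem{\phipre}\subseteq\C_0$ (condition~\ref{st2}) and $\C_i \subseteq \sem{\phipost^i}$ (condition~\ref{st4}). For the leads-to relation $\C_0 \leadsto \C_1 \cup \cdots \cup \C_k$ required by condition~\ref{st3}, the assumption $\calP \models \Pi$ implies that every fair run from $C \in \sem{\phipre}$ eventually enters and stays inside some $\sem{\phipost^i}$; at that moment the current configuration lies in $\C_i$. Proposition~\ref{prop:cert} then already yields a (possibly non-\Epres) certificate.

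The entire difficulty is therefore a \emph{definability} issue: showing that the sets $\C_0, \C_1, \dots, \C_k$ are Presburger, and that the certificate witnessing $\C_0 \leadsto \bigcup_i \C_i$ can be chosen Presburger and bounded. This is precisely where I plan to invoke \cite{EGLM17}. That work, building on Leroux's theory of semilinear inductive invariants for VASs, provides the ingredients I need in three steps: (i) \Epres{} definitions of $\postC^*(E)$ for every Presburger set $E$, hence of $\C_0$; (ii) \Epres{} definitions of the ``always-stays-in'' sets, hence of each $\C_i$ (one applies (i) to the complement of $\sem{\phipost^i}$ and negates); and (iii) for any two Presburger inductive sets $\C,\C'$ of a VAS with $\C \leadsto \C'$, a \Epres{} certificate, together with a uniform bound $k$ on the lengths of the required witnessing executions, so that the pair $(f,k)$ is a bounded \Epres{} certificate in the sense of our definition.

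The main obstacle is step (iii): turning the abstract certificate provided by Proposition~\ref{prop:cert} into a Presburger function with a uniform length bound. I would discharge this by quoting the relevant statement of \cite{EGLM17} and matching its notion of liveness witness against our notion of bounded \Epres{} certificate; steps (i) and (ii) are then routine Presburger manipulations given their main definability theorem. Once (i)--(iii) are in place, the two-layer DAG with top stage $\C_0$ and terminal stages $\C_1,\dots,\C_k$ satisfies all four conditions of Definition~\ref{def:SG} and is \Epres{} throughout, completing the construction.
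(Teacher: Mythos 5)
Your ``if'' direction and the general two-layer shape (one inductive top stage plus terminal stages for the $\phipost^i$) are fine, but the ``only if'' direction has a genuine gap: the stages you choose are in general \emph{not} Presburger, and the statement you attribute to \cite{EGLM17} in step~(i) does not exist. For replicated systems (as for VASs), $\postC^*(E)$ of a Presburger set $E$ is not always expressible in Presburger arithmetic --- the paper itself stresses exactly this point in Section~\ref{sec:stagegraph_construction} when justifying the overapproximation $\PReach(\sem{\phipre})$, citing Hopcroft--Pansiot --- and by reversing transitions the same holds for $\preC^*$. Consequently neither $\C_0=\postC^*(\sem{\phipre})$ nor $\C_i=\overline{\preC^*\bigl(\overline{\sem{\phipost^i}}\bigr)}=\sem{\Box\phipost^i}$ can serve as a Presburger stage, so steps (i) and (ii) of your plan fail at the outset, and step (iii) is left as precisely the hard part, with no concrete statement to quote.

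The paper's proof works around both obstacles with results of Leroux (not just \cite{EGLM17}). For the terminal stages it does not take all of $\sem{\Box\phipost^i}$ but only its bottom part: the set $\mathcal{B}$ of bottom configurations and the mutual reachability relation are effectively Presburger by \cite{EGLM17}, so $\mathcal{S}_i=\mathcal{B}\cap\sem{\Box\phipost^i}$ is Presburger (it is definable via mutual reachability inside $\mathcal{B}$). For the top stage it does not use the exact forward closure but invokes the separator lemma \cite[Lem.~9.1]{Ler12} to obtain an \emph{inductive Presburger} set $\mathcal{I}'\supseteq\sem{\phipre}$ with $\postC^*(\mathcal{I}')\cap(\mathcal{B}\setminus\mathcal{S})=\emptyset$; since every fair run eventually reaches $\mathcal{B}$, this gives $\mathcal{I}'\leadsto\mathcal{S}$ where $\mathcal{S}=\mathcal{S}_1\cup\dots\cup\mathcal{S}_k$. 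Finally, the bounded Presburger certificate is not obtained by massaging Proposition~\ref{prop:cert}, but from the flatability result \cite[Cor.~XI.3]{Ler13}: there is a bounded language $L=w_1^*\cdots w_k^*$ with $\mathcal{I}'\subseteq\preC_L(\mathcal{S})$, and $f(C)$ is defined as the length of a shortest suffix of a word of $L$ taking $C$ into $\mathcal{S}$, which is Presburger-definable and yields a bounded certificate $(f,1)$. So the repair is not a matter of ``routine Presburger manipulations'': you must replace your exact stages by these definable surrogates (inductive invariant plus bottom-restricted postcondition sets) and use the bounded-language machinery for the certificate.
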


We observe that testing whether a given graph is a \Epres{} stage
graph reduces to Presburger arithmetic satisfiability, which is
decidable~\cite{Presburger} and whose complexity lies between
\textsf{2-}\NEXP{} and \textsf{2-}\EXPSPACE~\cite{Ber80}:

\begin{restatable}{theorem}{thmStageGraphDec}\label{thm:stagegraphdec}
The problem of deciding whether an acyclic graph of Presburger sets and Presburger certificates  is a \Epres{} stage graph, for a given stable termination property, is reducible in polynomial time to the satisfiability problem for Presburger arithmetic.
\end{restatable}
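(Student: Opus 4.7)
My plan is to translate each of the four conditions in Definition~\ref{def:SG} into a Presburger formula, so that the input graph is a Presburger stage graph for $\Pi$ if{}f the conjunction of the four formulas is valid, equivalently if{}f its negation is unsatisfiable. Since each constituent formula is produced by a syntactic, polynomial-time construction, this yields the desired reduction to Presburger satisfiability.

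Fix notation: suppose the input graph has stages $\C_i = \sem{\phi_i}$ for $i = 1, \ldots, n$, and each non-terminal stage $\C_i$ with successors $\C_{i,1}, \ldots, \C_{i,m_i}$ carries a Presburger certificate $(f_i, k_i)$, where $f_i$ is encoded by a Presburger formula $\varphi_i(\vec{x}, y)$ with $f_i(C) = n \iff \varphi_i(C, n)$, and $k_i \in \N$. Conditions~\ref{st1}, \ref{st2} and \ref{st4} admit direct encodings: inductiveness of $\C_i$ under a transition $t \in T$ is $\forall C : (\phi_i(C) \land C \ge \pre{t}) \implies \phi_i(C + \effect{t})$; coverage of initial configurations is $\forall C : \phipre(C) \implies \bigvee_i \phi_i(C)$; and that a terminal stage $\C_i$ meets a postcondition is $\bigvee_\ell (\forall C : \phi_i(C) \implies \phipost^\ell(C))$. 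These are straightforward Boolean combinations of the given Presburger formulas and the (finitely many) transitions.

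The main obstacle is condition~\ref{st3}, which for each non-terminal stage $\C_i$ requires the Presburger encoding of
\[
\forall C : \Bigl( \phi_i(C) \land \bigwedge_j \neg \phi_{i,j}(C) \Bigr) \implies \exists C', n, n' :\ C \trans{\le k_i} C' \land \varphi_i(C, n) \land \varphi_i(C', n') \land n > n'.
\]
To encode the bounded-reachability predicate $C \trans{\le k_i} C'$, I would introduce $k_i + 1$ fresh configuration variables $D_0, \ldots, D_{k_i}$ with $D_0 = C$ and $D_{k_i} = C'$, and for each $j \in \{1, \ldots, k_i\}$ assert $\bigvee_{t \in T} \bigl( D_{j-1} \ge \pre{t} \land D_j = D_{j-1} + \effect{t} \bigr)$. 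The silent transition $\emptyset \tr \emptyset$ is always in $T$ and always enabled, so padding makes ``at most $k_i$ steps'' coincide with ``exactly $k_i$ steps'' and no outer disjunction over the length is needed. This subformula has size $O(k_i \cdot |T| \cdot |Q|)$.

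Conjoining the four encodings yields a Presburger formula of size polynomial in the input (treating each $k_i$ in unary; a binary encoding can be accommodated by expressing powers $R_{2^j}$ of the one-step relation via iterated squaring with the usual Savitch trick and composing them along the binary digits of $k_i$, keeping the size polynomial in $\log k_i$). Its validity is equivalent to the input graph being a Presburger stage graph for $\Pi$, and the whole construction runs in polynomial time. All steps except the bounded-reachability encoding for condition~\ref{st3} are mechanical translations of the definitions; that encoding is the only place where any real work is needed.
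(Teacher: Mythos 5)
Your reduction follows essentially the same route as the paper's proof: conditions~\ref{st1}, \ref{st2} and~\ref{st4} are encoded directly, and condition~\ref{st3} is encoded by unfolding the bounded certificate into $k_i$ step variables, using the always-enabled silent transition so that ``at most $k_i$ steps'' can be replaced by ``exactly $k_i$ steps'' --- exactly the paper's trick.

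There is, however, one genuine omission. The input only supplies a formula $\varphi_i(\vec{x},y)$ that is \emph{claimed} to encode the certificate function $f_i$, so deciding whether the graph is a \Epres{} stage graph also requires checking that $\varphi_i$ actually defines a (total, single-valued) function; the paper adds the sentence $\left(\forall C \colon \exists y \colon \varphi_i(C,y)\right) \land \left(\forall C,y,y' \colon (\varphi_i(C,y)\land\varphi_i(C,y'))\rightarrow y=y'\right)$ for precisely this purpose. Your encoding of condition~\ref{st3} quantifies $n,n'$ existentially, so if $\varphi_i$ is multi-valued --- say $\varphi_i(C,n)\equiv (n\le 1)$ --- the decrease requirement is satisfied vacuously by picking $n=1$ and $n'=0$, and your formula would accept graphs whose ``certificates'' are not certificates at all; hence, as written, the reduction is unsound. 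The repair is exactly the paper's functionality check, so the gap is small but real. Two minor remarks: lack of totality is harmless in your encoding (with no witness $n$ the right-hand side of your implication fails, so the test is merely conservative), and your aside on binary-encoded $k_i$ via iterated squaring is correct for a reduction to full Presburger satisfiability but goes beyond what the paper claims --- the paper implicitly takes $k_i$ in unary and notes that then the resulting sentences are of polynomial size and lie in the $\forall\exists$ fragment, a property the squaring construction would destroy.
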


\section{Algorithmic construction of stage graphs}\label{sec:stagegraph_construction}
At the current state of our knowledge, the decision procedure derived from Theorem \ref{thm:stagegraphdec} has little practical relevance. From a theoretical point of view, the \TOWER-hardness result of~\cite{CLLLM19} implies that the stage graph may have non-elementary size in the system size. In practice, systems have relatively small stage graphs, but, even so, the enumeration of all candidates immediately leads to a prohibitive combinatorial explosion.

For this reason, we present a procedure to automatically \emph{construct} (not guess) a Presburger stage graph $G$ for a given replicated system $\calP$ and a stable termination property
$\Pi = (\phipre, \Phipost)$. The procedure may \emph{fail}, but, as shown in the experimental section, it succeeds for many systems from the literature.

The procedure is designed to be implemented on top of a solver for the existential fragment of Presburger arithmetic. While every formula of Presburger arithmetic has an equivalent formula within the existential fragment~\cite{Cooper72,Presburger}, quantifier-elimination may lead to a doubly-exponential blow-up in the size of the formula. Thus, it is important to emphasize that our procedure \emph{never requires to eliminate quantifiers}: If the pre- and postconditions of $\Pi$ are supplied as quantifier-free formulas, then all constraints of the procedure remain in the existential fragment.

We give a high-level view of the procedure (see Algorithm \ref{alg:sg}), which uses several functions, described in detail in the rest of the paper. The procedure  maintains a workset $\WS$ of Presburger stages, represented by existential Presburger formulas. Initially, the only stage is an inductive Presburger overapproximation $\PReach(\sem{\phipre})$ of the configurations
reachable from $\sem{\phipre}$ ($\PReach$ is an abbreviation for ``potentially reachable''). Notice that we must necessarily use an overapproximation, since $\postC^*(\sem{\phipre})$ is not always
expressible in Presburger arithmetic\footnote{This follows easily from the fact that $\postC^*(\psi)$ is not always expressible in Presburger arithmetic for vector addition systems, even if $\psi$ denotes a single configuration~\cite{HP79}.}.
We use a refinement of the overapproximation introduced in~\cite{BEJM17,ELMMN14}, equivalent to
the overapproximation of~\cite{BFHH17}.

\begin{algorithm}[t]
  \DontPrintSemicolon
  \LinesNumbered
  \KwIn{replicated system $\PP = (Q, T)$, stable term.\ property $\Pi = (\phipre, \Phipost)$}
  \KwResult{a stage graph of $\PP$ for $\Pi$}
  \algskip

  $\WS \leftarrow \{ \PReach(\sem{\phipre}) \}$\;
  \algskip

  \While{$\WS \neq \emptyset$}{
    \textbf{remove} $\Stage$ \textbf{from} $\WS$\;
    \algskip

    \If{$\neg \Terminal(\Stage, \Phipost)$}{
      $U \leftarrow \EventDead(\Stage)$\;
      \algskip

      \If{$U \neq \emptyset$}{
        $\WS \leftarrow \WS \cup \{\IndOverapprox(\Stage, U)\}$
      }
      \Else{
        $\WS \leftarrow \WS \cup \Split(\Stage)$
      }
    }
  }
  \caption{procedure for the construction of stage graphs.}\label{alg:sg}
\end{algorithm}

In its main loop (lines 2--9),  Algorithm \ref{alg:sg} picks a Presburger stage $\Stage$ from the workset, and processes it. First, it calls $\Terminal(\S,\Phipost)$ to check if $\Stage$ is terminal, \ie, whether $\Stage \models \phipost^i$ for some $\phipost^i \in \Phipost$. This reduces to checking the unsatisfiability of the existential Presburger formula $\phi \land \neg \phipost^i$, where $\phi$ is the formula characterizing $\Stage$. If $\S$ is not terminal, then the procedure attempts to construct successor stages in lines 5--9, with the help of three further functions: $\EventDead$,
$\IndOverapprox$, and $\Split$. In the rest of this section, we present the intuition behind lines 5--9,
and the specification of the three functions. \Cref{sec:setconfdead,sec:split,sec:evdead} present the implementations we use for these functions.

Lines 5--9 are inspired by the behavior of most replicated systems designed by humans, and are based on the notion of \emph{dead} transitions,
which can never occur again (to be formally defined below).
Replicated systems are usually designed to run in \emph{phases}. Initially, all transitions are alive, and the end of a phase is marked by the ``death'' of one or more transitions, i.e., by reaching a configuration at which these transitions are dead. The system keeps ``killing transitions'' until no
transition that is still alive can lead to a configuration violating the postcondition. The procedure mimics this pattern. It constructs stage graphs in which if $\Stage'$ is a successor of $\Stage$, then the set of transitions dead at $\Stage'$ is a \emph{proper superset} of the transitions dead at $\Stage$. For this, $\EventDead(\Stage)$  computes a set of transitions that are alive at some configuration of $\Stage$, but which will become dead in every fair run starting at $\Stage$ (line~5). Formally, $\EventDead(\Stage)$ returns a set $U \subseteq \overline{\Dead(\Stage)}$ such that $\Stage \models \Diamond \dead{U}$, defined as follows.

\begin{definition}\label{def:dis-dead}
A transition of a replicated system $\calP$ is \emph{dead} at a configuration $C$ if it is disabled at every configuration reachable from $C$ (including $C$ itself). A transition is \emph{dead} at a stage $\Stage$ if it is dead at every configuration of $\Stage$.
Given a stage $\Stage$ and a set $U$ of transitions, we use the following notations:
\begin{itemize}
 \item $\Dead(\Stage)$: the set of transitions dead at $\Stage$;
 \item $\sem{\disa{U}}$: the set of configurations at which all transitions of $U$ are disabled;
 \item $\sem{\dead{U}}$: the set of configurations at which all transitions of $U$ are dead.
\end{itemize}
\end{definition}

Observe that we can compute $\Dead(\Stage)$ by checking unsatisfiability of a sequence of existential Presburger formulas:
as $\Stage$ is inductive, we have $\Dead(\Stage) = \{ t \mid \Stage \models \disa{t} \}$, and $\Stage \models \disa{t}$ holds
if{}f the existential Presburger formula $\exists C \colon \phi(C) \land C \ge \pre{t}$ is unsatisfiable, where $\phi$ is the formula characterizing $\Stage$.

The following proposition, whose proof appears in~\Cref{app:stagegraph_construction}, shows that determining whether a given transition will eventually become dead, while decidable, is \PSPACE-hard.
Therefore, \Cref{sec:evdead} describes two implementations of this function, and a way to combine them, which exhibit a good trade-off between precision and computation time.
\begin{restatable}{proposition}{eventdeadhardness}\label{prop:eventdead-hardness}
    Given a replicated system $\calP$, a stage $\Stage$ represented by an existential Presburger formula $\phi$ and a set of transitions $U$,
    determining whether $\Stage \models \Diamond \dead{U}$ holds is decidable and \PSPACE-hard.
\end{restatable}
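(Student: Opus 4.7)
The proof has two parts, each with a distinct flavor.

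\textbf{Decidability.} The plan is to reformulate $\Stage \models \Diamond \dead{U}$ as a question about the underlying VAS that is known to be decidable. Because agents are conserved, the Markov chain on $\postC^*(\{C\})$ is finite for every $C \in \Stage$, so a fair run almost surely reaches and stays within some bottom strongly connected component (BSCC). Moreover, $\sem{\dead{U}}$ is closed under reachability, hence each BSCC lies entirely inside it or entirely outside. Combining these observations with the inductiveness of $\Stage$, one obtains the equivalence: $\Stage \not\models \Diamond \dead{U}$ iff there exist $D \in \Stage$ and $t \in U$ with $D \geq \pre{t}$ such that $D$ lies in a BSCC of its own reachability subgraph, i.e., $D$ is a home state of the sub-VAS rooted at $D$ (every $D' \in \postC^*(\{D\})$ can re-reach $D$). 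Existence of such $D$ in the Presburger-definable set $\Stage \cap \{D : \exists t \in U,\, D \geq \pre{t}\}$ can then be decided by combining Presburger reasoning with known decidable VAS problems (home-state membership, repeated reachability), all reducing to the decidable VAS reachability problem.

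\textbf{\PSPACE-hardness.} The plan is to reduce from reachability in $1$-safe Petri nets, which is \PSPACE-complete. Given a $1$-safe net $N$ with initial marking $M_0$ and target marking $M_f$, I would construct a replicated system $\calP_N$ that simulates $N$ with one agent per place, augmented by a small gadget: two fresh states $p_A, p_B$ together with non-silent transitions $t_A \colon p_A \to p_B$ and $t_B \colon p_B \to p_A$, plus a ``detect'' transition that moves a pre-placed sentinel agent into $p_A$ once the simulated marking covers $M_f$. Taking $U = \{t_A, t_B\}$, the transitions of $U$ stay alive forever in every fair run once the sentinel is released, and are dead otherwise. Choosing $\Stage$ as a Presburger-definable inductive invariant that captures the orbit of $M_0$ via conservation laws on agent counts and the $1$-safety bounds $C(p)\le 1$ for $p\in P$, the equivalence $M_0 \trans{*} M_f$ in $N$ iff $\Stage \not\models \Diamond \dead{U}$ would give the \PSPACE lower bound.

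The main obstacle is the decidability argument: being a home state of one's own reachability subgraph is not a Presburger-definable property of $D$, so the witness search cannot be handled by plain Presburger solving and must route through VAS reachability machinery (Mayr--Kosaraju--Leroux) and decidability of the home-state problem (de Frutos Escrig--Johnen). A secondary, but nontrivial, care is required in the hardness reduction so that $\Stage$ is simultaneously inductive, Presburger, and tight enough that the question over the whole $\Stage$ agrees with the question at the single initial configuration $M_0$; sufficiently strong Presburger-expressible conservation and capacity invariants, inherited from $1$-safety, are used to achieve this.
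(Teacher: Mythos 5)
Your reformulation of the problem is sound: using inductiveness of $\Stage$ and the fact that fair runs get trapped in bottom strongly connected components, $\Stage \not\models \Diamond\dead{U}$ holds if{}f some $D \in \Stage$ enables a transition of $U$ and is mutually reachable with every configuration of $\postC^*(\{D\})$; this is essentially the same ``$U$ is live from some configuration of $\Stage$'' reformulation that underlies the paper's proof. The gap is in turning this into a decision procedure. You must decide an \emph{existential} question over the infinite Presburger set $\Stage$, and decidability of the home-state problem or of VAS reachability for a single given $D$ (which is trivial here anyway, since conservation of agents makes $\postC^*(\{D\})$ finite) does not let you search over infinitely many candidates: enumeration never terminates on negative instances. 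You need either that the set of witnesses $D$ is effectively Presburger, or a result that decides the question for a whole semilinear set of initial configurations at once. The paper takes the latter route: it merges $U$ into a single transition by a small gadget and invokes~\cite[Thm.~2]{JP19}, whose proof explicitly handles an arbitrary semilinear set of initial configurations. Moreover, your declared ``main obstacle'' is actually false in this conservative setting: by~\cite{EGLM17} (already used in the proof of \Cref{thm:stagegraph}), the set of bottom configurations --- your ``home states of their own reachability graph'' --- \emph{is} effectively Presburger, and using that fact would repair your argument; as written, the appeal to ``combining Presburger reasoning with known decidable VAS problems'' does not establish decidability.

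The hardness reduction is also under-specified exactly where it is difficult. For the direction ``$M_0$ cannot reach $M_f$ implies $\Stage \models \Diamond\dead{U}$'' you need that \emph{no} configuration of $\Stage$ can release the sentinel. But a $\Stage$ cut out by conservation laws and $1$-safety capacity bounds contains every $1$-bounded marking consistent with those counts, not only those reachable from $M_0$; if any such spurious marking can reach the encoding of $M_f$, then $U$ stays live from a configuration of $\Stage$ and $\Stage \not\models \Diamond\dead{U}$ even though $M_0$ cannot reach $M_f$. Making $\Stage$ ``tight enough'' would essentially require a polynomial-time computable inductive Presburger set sandwiched around the reachability set of $M_0$ and excluding all such markings --- which is the very problem being reduced from (and note you cannot take $\Stage = \{M_0\}$, since a stage must be inductive). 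The paper's reduction confronts this head-on: it takes $\Stage = \N^{Q'}$, i.e.\ \emph{all} configurations, and adds a reset/cleanup gadget (agents may retire, surplus agents are permanently removed, and a dedicated transition regenerates the genuine initial configuration) so that from every configuration the target transition is live if{}f the designated initial configuration reaches the target in the original net. Your construction has no analogous mechanism for neutralizing spurious configurations, so the claimed equivalence does not hold as stated; the sentence acknowledging that ``care is required'' is precisely the missing proof.
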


If the set $U$ returned by $\EventDead(\Stage)$ is nonempty, then we know that every fair run starting at a configuration of $\Stage$ will eventually reach a configuration of $\Stage \cap \sem{\dead{U}}$. So, this set, or any inductive overapproximation of it, can be a legal successor of $\Stage$ in the stage graph. Function $\IndOverapprox(\Stage,U)$ returns such an inductive  overapproximation  (line 7). To be precise, we show in \Cref{sec:setconfdead} that $\sem{\dead{U}}$ is a Presburger set that can be computed exactly, albeit in doubly-exponential time in the worst case. The section also shows how to compute overapproximations more efficiently.
If the set $U$ returned by $\EventDead(\Stage)$ is empty, then we cannot yet construct any successor of $\Stage$. Indeed, recall that we want to construct stage graphs in which if $\Stage'$ is a successor of $\Stage$, then $\Dead(\Stage')$ is a \emph{proper superset} of $\Dead(\Stage)$. In this case,
we proceed differently and try to split $\Stage$:

\begin{definition}
A \emph{split} of some stage $\Stage$ is a set $\{\Stage_1, \ldots, \Stage_k\}$ of (not necessarily disjoint) stages such that the following holds:
	\begin{itemize}
        \item  $\Dead(\Stage_i) \supset \Dead(\Stage)$ for every $1 \leq i \leq k$, and
        \item  $\Stage = \bigcup_{i=1}^k \Stage_i$.
	\end{itemize}
\end{definition}

If there exists a split $\{\Stage_1, \ldots, \Stage_k\}$ of $\Stage$, then we can let
$\Stage_1, \ldots, \Stage_k$ be the successors of $\Stage$ in the stage graph.
Observe that a stage may indeed have a split. We have $\Dead(\C_1 \cup \C_2) = \Dead(\C_1) \cap \Dead(\C_2)$, and hence $\Dead(\C_1 \cup \C_2)$ may be a proper subset of both $\Dead(\C_1)$ and $\Dead(\C_2)$:

\begin{example}
	Consider the system with states $\{q_1, q_2\}$ and transitions
	$t_i \colon q_i \mapsto q_i$ for $i \in \{1, 2\}$. Let
	$\Stage = \{ C \mid C(q_1) = 0 \vee C(q_2) = 0 \}$, \ie, $\S$ is
	the (inductive) stage of configurations disabling either $t_1$ or
	$t_2$. The set $\{ \Stage_1, \Stage_2 \}$, where $\Stage_i = \{ C
	\in \Stage \mid C(q_i) = 0 \}$, is a split of $\Stage$ satisfying
	$\Dead(\Stage_i) = \{t_i\} \supset \emptyset = \Dead(\Stage)$. \defqed
\end{example}

\noindent The canonical split of $\Stage$, if it exists, is the set $\{ \Stage \cap \sem{\dead{t}} \mid t \notin\Dead(\Stage) \}$. As mentioned above, \Cref{sec:setconfdead} shows that $\sem{\dead{U}}$ can be computed exactly for every $U$, but the computation can be expensive.
Hence, the canonical split can be computed exactly at potentially high cost. Our implementation uses an underapproximation of $\sem{\dead{t}}$, described in \Cref{sec:split}.

\section[Computing and approximating dead(U)]{Computing and approximating $\sem{\dead{U}}$} \label{sec:setconfdead}
We show that, given a set $U$ of transitions,
\begin{itemize}
	\item we can effectively compute an existential Presburger formula describing the set $\sem{\dead{U}}$, with high computational cost in the worst case, and
	\item we can effectively compute constraints that overapproximate or underapproximate $\sem{\dead{U}}$, at a reduced computational cost.
\end{itemize}
\subsubsection{Downward and upward closed sets.}

We enrich $\Nat$ with the limit element $\omega$ in the usual way. In
particular, $n < \omega$ holds for every $n \in \Nat$. An
\emph{$\omega$-configuration} is a mapping $C^\omega \colon Q
\rightarrow \Nat \cup \{\omega\}$. The \emph{upward closure} and
\emph{downward closure} of a set $\C^\omega$ of
$\omega$-configurations are the sets of configurations $\Up{\C^\omega}$ and $\Down{\C^\omega}$, respectively defined as:
\begin{align*}
  \Up{\C^\omega} &\defeq \{C \in \N^Q \mid C \geq C^\omega \text{ for some }
  C^\omega \in \C^\omega\}, \\
  \Down{\C^\omega} &\defeq \{C \in \N^Q \mid C \leq C^\omega \text{ for some }
  C^\omega \in \C^\omega\}.
\end{align*}
A set $\C$ of configurations is \emph{upward closed} if
$\C = \Up{\C}$, and \emph{downward closed} if $\C = \Down{\C}$. These
facts are well-known from the theory of well-quasi orderings:

\begin{lemma}\label{lem:updown}
  For every set $\C$ of configurations, the following holds:
  \begin{enumerate}
  \item $\C$ is upward closed iff $\overline{\C}$ is downward closed
    (and vice versa);

  \item if $\C$ is upward closed, then there is a unique
    minimal finite set of configurations $\inf{\C}$, called its
    \emph{basis}, such that $\C = \Up{\inf{\C}}$;

  \item if $\C$ is downward closed, then there is a unique minimal
    finite set of $\omega$-configurations $\sup{\C}$, called its
    \emph{decomposition}, such that $\C = \Down{\sup{\C}}$.
  \end{enumerate}
\end{lemma}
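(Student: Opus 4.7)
The three items are standard facts about the well-quasi-order $(\N^Q,\le)$; I will treat them in order, re-using the earlier one when convenient.

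For item (1), I would just unfold the definitions: $C \in \overline{\Up{\overline{\C}}}$ iff no $C' \le C$ lies in $\overline{\C}$, iff every $C' \le C$ lies in $\C$; so $\overline{\C}$ being upward closed is equivalent to $\C$ being downward closed, and by symmetry the converse holds. This is a couple of lines with no real content.

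For item (2), the plan is to invoke Dickson's lemma, which states that $(\N^Q,\le)$ is a well-quasi-order. Let $\inf{\C}$ be the set of $\le$-minimal elements of $\C$. Since $\C$ is upward closed, $\C \supseteq \Up{\inf{\C}}$ is trivial, and conversely every $C \in \C$ dominates some minimal element of $\C$ because there are no infinite strictly decreasing chains in $\N^Q$. Dickson's lemma ensures that any antichain in $\N^Q$ is finite, so $\inf{\C}$ is finite. Uniqueness and minimality follow because any generating set must contain every element of $\inf{\C}$: if $M \in \inf{\C}$ and $M \in \Up{B}$ then some $B' \le M$ lies in $B \subseteq \C$, but minimality of $M$ forces $B' = M$.

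Item (3) is the main obstacle and really the content of the lemma; it is the Valk--Jantzen / ideal decomposition of downward closed sets. The plan is to prove by induction on $|Q|$ that every downward closed subset of $\N^Q$ is a finite union of downward closures of $\omega$-configurations. For the induction step, pick some coordinate $q$ and look at the slices of $\C$ by the value of $C(q)$. Either $\C$ is bounded in coordinate $q$, in which case it splits into finitely many pieces each reducing to the case $|Q|-1$, or $\C$ is unbounded in $q$, in which case the projection has an $\omega$ on coordinate $q$ and one applies induction to the downward closure of the projection. Using item (1) and item (2) applied to $\overline{\C}$ gives an alternative route: the minimal elements of $\overline{\C}$ form a finite antichain, and each $\omega$-configuration of the decomposition of $\C$ is determined by a ``choice'' of which minimal element of $\overline{\C}$ to dominate on which coordinate. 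Either way, one obtains a finite set $S$ of $\omega$-configurations with $\C = \Down{S}$. The decomposition $\sup{\C}$ is then defined as the set of $\le$-maximal elements of $S$ (under the natural extension of $\le$ with $n < \omega$ for all $n \in \N$). Finiteness of $\sup{\C}$ follows because antichains of $(\N \cup \{\omega\})^Q$ are finite (Dickson again, since the order on $\N \cup \{\omega\}$ is a wqo). Uniqueness and minimality are proved exactly as in item (2), with $\le$ now on $\omega$-configurations: any generating set must contain each element of $\sup{\C}$, by maximality.
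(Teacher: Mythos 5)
The paper offers no proof of this lemma at all---it simply cites it as folklore of wqo theory---so the only question is whether your proposed proofs are sound. Items (1) and (2) are fine: the complementation argument and the Dickson's-lemma argument (finite antichain of minimal elements, no infinite strictly decreasing chains, and $B \subseteq \C$ for any generating set $B$, so every minimal element must literally occur in $B$) are complete. The problems are in item (3). First, the inductive route is wrong as stated in its unbounded branch: take $Q=\{q_1,q_2\}$ and $\C=\{C : C(q_1)=0 \lor C(q_2)=0\}$. This set is unbounded in $q_1$, but ``put an $\omega$ on $q_1$ and recurse on the downward closure of the projection'' yields $\Down{\{(\omega,\omega)\}}=\N^Q\supsetneq\C$; the correct decomposition is $\Down{\{(\omega,0),(0,\omega)\}}$. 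The unbounded and bounded parts interact, so the case split has to be finer (e.g.\ for each subset of coordinates, the portion of $\C$ unbounded exactly in those directions). Your alternative route via the complement does work and is the cleanest fix: with $\inf{\overline{\C}}=\{B_1,\dots,B_m\}$ one has $\C=\bigcap_j\{C: C\not\ge B_j\}=\bigcap_j\bigcup_{q:B_j(q)>0}\Down{C^\omega_{j,q}}$ where $C^\omega_{j,q}(q)=B_j(q)-1$ and $\omega$ elsewhere; distributing and using that the intersection of two such ideals is again one (componentwise minimum) gives the finite set $S$.

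The second, more important gap is the uniqueness/minimality claim for $\sup{\C}$, which is \emph{not} ``exactly as in item (2)''. In item (2) the argument works because the generators are configurations and hence elements of $\C$ itself, so a minimal element of $\C$ is found in one step inside any generating set. Here the generators are $\omega$-configurations, which do not belong to $\C$, and from $\Down{\{s\}}\subseteq\Down{S'}$ you cannot immediately conclude $s\le s'$ for some $s'\in S'$: you must exclude that $\Down{\{s\}}$ is covered by the union of several $\Down{\{s'\}}$ without being contained in any single one. This irreducibility (primality/directedness of ideals) is the actual content of uniqueness, and it needs a short explicit argument: if $s\not\le s'$ for every $s'\in S'$, pick for each such $s'$ a coordinate where $s$ strictly exceeds the (finite) value of $s'$, and build a finite configuration $C\le s$ by replacing every $\omega$-entry of $s$ with a sufficiently large integer; then $C\in\Down{\{s\}}\subseteq\C$ but $C\notin\Down{S'}$, a contradiction. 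With that lemma in hand, your plan goes through: every $\le$-maximal element of one generating set must appear in every other generating set, the maximal elements of $S$ still generate $\C$, and uniqueness of the decomposition follows. Without it, ``by maximality'' does not close the argument.
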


\subsubsection{Computing $\sem{\dead{U}}$ exactly. }

It follows immediately from \Cref{def:dis-dead} that both $\sem{\disa{U}}$ and $\sem{\dead{U}}$ are downward closed. Indeed, if all transitions of $U$ are disabled at $C$, and $C' \leq C$, then they are also disabled at $C'$, and clearly the same holds for transitions dead at $C$. Furthermore:

\begin{proposition}\label{prop:dis-and-dead}
  For every set $U$ of transitions, the (downward) decomposition of
  both $\sup{\sem{\disa{U}}}$ and $\sup{\sem{\dead{U}}}$ is
  effectively computable.
\end{proposition}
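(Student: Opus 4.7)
Both $\sem{\disa{U}}$ and $\sem{\dead{U}}$ are downward closed (if every $t\in U$ is disabled, resp.\ dead, at $C$ and $C'\le C$, then the same holds at $C'$), so by \Cref{lem:updown} each admits a unique decomposition. The plan is to compute $\sup{\sem{\disa{U}}}$ directly from the preconditions $\{\pre{t}\}_{t\in U}$, and to obtain $\sup{\sem{\dead{U}}}$ by first computing a basis of the upward closed set $\preC^*(\Up{\{\pre{t}:t\in U\}})$ via a standard WSTS-style backward reachability iteration, and then dualising to a decomposition of its complement.

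For $\sem{\disa{U}}$, a configuration $C$ disables a single transition $t$ iff $C(q)<\pre{t}(q)$ for some $q$ with $\pre{t}(q)>0$. Hence the set of configurations disabling $t$ is the downward closure of the finite set of $\omega$-configurations $\{D^{t,q}:q\in Q,\ \pre{t}(q)>0\}$, where $D^{t,q}(q)=\pre{t}(q)-1$ and $D^{t,q}(q')=\omega$ for $q'\ne q$. Then $\sem{\disa{U}}$ is the intersection over $t\in U$ of these downward closed sets; the decomposition of such an intersection is obtained in a direct, finite way by enumerating one $\omega$-configuration per factor, taking componentwise minima, and discarding dominated elements.

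For $\sem{\dead{U}}$, observe that $C\in\sem{\dead{U}}$ iff no configuration reachable from $C$ enables any $t\in U$, so
\[
\sem{\dead{U}} \;=\; \N^Q \setminus \preC^*(\Up{B_0}), \qquad B_0 \defeq \{\pre{t}:t\in U\}.
\]
The set $\preC^*(\Up{B_0})$ is upward closed; its basis is computed by the standard backward fixpoint iteration $B_{i+1}=\min\bigl(B_i\cup\bigcup_{t\in T}\mathit{pre}_t(\Up{B_i})\bigr)$, where $\mathit{pre}_t(\Up{B_i})=\{\max(\pre{t},\,\max(m-\effect{t},\,\vec 0)):m\in B_i\}$ under componentwise maximum. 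Termination of this increasing chain of upward closed sets is guaranteed by Dickson's lemma applied to $(\N^Q,\le)$; soundness and completeness are the standard backward-reachability argument for vector addition systems. Given a basis $B^\ast$ of $\preC^*(\Up{B_0})$, the decomposition of the downward closed complement is produced by the usual dualisation: for every choice function $f$ selecting, for each $m\in B^\ast$, a state $f(m)$ with $m(f(m))>0$, form the $\omega$-configuration whose value at $q$ equals $\min\{m(f(m))-1:f(m)=q\}$ if $q\in\mathrm{img}(f)$ and $\omega$ otherwise, and then remove dominated elements.

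The only nontrivial ingredient is the backward fixpoint loop for $\sem{\dead{U}}$: termination rests entirely on the well-quasi-ordering of $(\N^Q,\le)$, and the potentially explosive growth of the minimal basis is what accounts for the doubly-exponential worst-case cost mentioned immediately before the proposition. The $\disa{U}$ case and the dualisation step are purely combinatorial manipulations on finite sets of $\omega$-configurations and introduce no further difficulty.
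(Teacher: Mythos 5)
Your proposal is correct and follows essentially the same route as the paper: for $\sem{\disa{U}}$ a direct finite decomposition read off from the preconditions $\pre{t}$, and for $\sem{\dead{U}}$ the standard backward-reachability fixpoint over upward-closed sets (which is exactly the procedure of \cite[Prop.~2]{JP19} that the paper invokes) followed by dualisation to the downward-closed complement. The only differences are that you spell out the backward iteration and dualisation explicitly rather than citing them, and that you handle the intersection over $t \in U$ via componentwise minima of one $\omega$-configuration per transition, which is in fact slightly more careful than the paper's displayed formula $\{C^\omega_{t,q} : t \in U,\, q \in \pre{t}\}$ for $\sup{\sem{\disa{U}}}$.
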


\begin{proof}
	For every $t \in U$ and $q \in \pre{t}$, let $C_{t, q}^\omega$
	be the $\omega$-configuration such that $C_{t, q}^\omega(q) =
	\pre{t}(q) - 1$ and $C_{t, q}^\omega(p) = \omega$ for every $p \in Q
	\setminus \{q\}$. In other words, $C_{t, q}^\omega$ is the
	$\omega$-configuration made only of $\omega$'s except for state $q$
	which falls short from $\pre{t}(q)$ by one. This
	$\omega$-configurations captures all configurations disabled in $t$
	due to an insufficient amount of agents in state $q$. We have:
	$$\sup{\sem{\disa{U}}} = \{C_{t, q}^\omega : t \in U, q \in
        \pre{t}\}.$$ The latter can be made minimal by removing
        superfluous $\omega$-configurations.

	For the case of $\sup{\sem{\dead{U}}}$, we invoke~\cite[Prop.~2]{JP19} which gives a proof for the more general setting of
	(possibly unbounded) Petri nets. Their procedure is based on the
	well-known backwards reachability algorithm (see, e.g., \cite{ACJT96,FS01}). \qed
\end{proof}

Since $\sup{\sem{\dead{U}}}$ is finite, its computation allows to
describe $\sem{\dead{U}}$ by the following linear
constraint\footnote{Observe that if $C^\omega(q) = \omega$, then the
  term ``$C(q) \leq \omega$'' is equivalent to ``$\mathbf{true}$''.}:
$$\bigvee_{C^\omega \in \sup{\sem{\dead{U}}}} \bigwedge_{q \in Q}
\left[C(q) \leq C^\omega(q)\right].$$ However, the cardinality of
$\sup{\sem{\dead{U}}}$ can be exponential~\cite[Remark for Prop.~2]{JP19} in the
system size. For this reason, we are interested in constructing both
under- and over-approximations.

\subsubsection{Overapproximations of $\sem{\dead{U}}$.}
For every $i \in \N$, define $\sem{\dead{U}}^i$ as:
$$
\sem{\dead{U}}^0 \defeq \sem{\disa{U}}
\quad \text{ and } \quad
\sem{\dead{U}}^{i+1} \defeq \overline{\preC_T(\overline{\sem{\dead{U}}^{i}})} \cap \sem{\disa{U}}.
$$
Loosely speaking, $\sem{\dead{U}}^{i}$ is the set of configurations
$C$ such that every configuration reachable in at most
$i$ steps from $C$ disables $U$. We immediately have:
$$
\sem{\dead{U}} = \bigcap_{i=0}^\infty \sem{\dead{U}}^{i}.
$$
Using~\Cref{prop:dis-and-dead} and the following proposition, we obtain that $\sem{\dead{U}}^{i}$ is an effectively computable overapproximation of $\sem{\dead{U}}$.

\begin{restatable}{proposition}{postpresburger}\label{prop:post-presburger}
    For every Presburger set $\C$ and every set of transitions $U$, the sets $\preC_U(\C)$ and $\postC_U(\C)$ are effectively Presburger.
\end{restatable}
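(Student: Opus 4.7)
The plan is to observe that a single step $C \trans{t} C'$ for a fixed transition $t$ is definable by the elementary linear constraint $C \ge \pre{t} \land C' = C + \effect{t}$, which is trivially a Presburger formula since $\pre{t}$ and $\effect{t}$ are constant integer vectors. Because $U$ is finite, both $\postC_U$ and $\preC_U$ are obtained as finite Boolean combinations and one-step existential projections of the defining formula of $\C$, and the result then follows from closure of Presburger arithmetic under Boolean connectives and existential quantification.

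Concretely, let $\phi(\vec{x})$ be a Presburger formula with $\sem{\phi} = \C$, where $\vec{x}$ ranges over $\N^Q$. First I would define, for each $t \in U$, the formula
\[
\psi^{\text{post}}_t(\vec{y}) \;\defeq\; \exists \vec{x}\colon \phi(\vec{x}) \,\land\, \vec{x} \ge \pre{t} \,\land\, \vec{y} = \vec{x} + \effect{t},
\]
and show that $\postC_U(\C) = \sem{\bigvee_{t \in U} \psi^{\text{post}}_t}$ directly from the definitions of $\postC_U$ and of $\trans{t}$. For the pre-image, I would similarly set
\[
\psi^{\text{pre}}_t(\vec{x}) \;\defeq\; \phi(\vec{x} + \effect{t}) \,\land\, \vec{x} \ge \pre{t},
\]
where $\phi(\vec{x} + \effect{t})$ denotes the syntactic substitution of $x_q + \effect{t}(q)$ for each free variable $x_q$ in $\phi$ (a purely syntactic operation that preserves Presburger-definability and requires no quantifier here, since $\effect{t}$ is a constant). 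Then $\preC_U(\C) = \sem{\bigvee_{t \in U} \psi^{\text{pre}}_t}$.

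Both constructions are effective: given $\phi$ and the finite listing of $U$, the formulas above can be written down in time linear in $|U| \cdot |\phi|$ plus the size of the transition data. I would also point out, as a side remark relevant for the algorithmic use in Algorithm~\ref{alg:sg}, that the construction stays inside the existential fragment of Presburger arithmetic whenever $\phi$ does: the post-image adds only one existential block, and the pre-image uses only a substitution and a conjunction.

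There is no genuine obstacle here; the only thing to be careful about is the $\preC_U$ case, where one might be tempted to write an existential quantifier over $\vec{y}$ and thereby spuriously enlarge the formula. Using the substitution $\phi(\vec{x} + \effect{t})$ avoids this and keeps the formula in the existential fragment, which is the only subtlety worth flagging in the write-up.
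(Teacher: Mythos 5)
Your proof is correct and takes essentially the same route as the paper's: characterize a single step by $C \ge \pre{t} \land C' = C + \effect{t}$, take the finite disjunction over $t \in U$, and combine with the formula defining $\C$. The only cosmetic difference is that the paper eliminates the existential quantifier in the post-image as well, by substituting $C' - \effect{t}$ into the formula for $\C$, whereas you keep one existential block there; both versions are effective and remain in the existential fragment whenever the formula for $\C$ does.
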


Recall that function $\IndOverapprox(\Stage, U)$ of Algorithm~\ref{alg:sg} must return
an \emph{inductive} overapproximation of $\sem{\dead{U}}$.  Since $\sem{\dead{U}}^i$ might not be inductive in general, our implementation uses either the inductive overapproximations  $\IndOverapprox^i(\Stage, U) \defeq \PReach(\Stage \cap \sem{\dead{U}}^i)$, or the exact value $\IndOverapprox^\infty(\Stage, U) \defeq \Stage \cap \sem{\dead{U}}$. The table of results in the experimental section describes for each benchmark which overapproximation was used.

\subsubsection{Underapproximations of $\sem{\dead{U}}$: Death certificates.}
A \emph{death certificate} for $U$ in $\calP$ is a finite set $\C^\omega$
of $\omega$-configurations such that:
\begin{enumerate}
	\item $\Down{\C^\omega} \models \disa{U}$, i.e., every configuration of
	$\Down{\C^\omega}$ disables $U$, and
	\item $\Down{\C^\omega}$ is inductive, i.e., $\postC_T(\Down{\C^\omega}) \subseteq \Down{\C^\omega}$.
\end{enumerate}

If $U$ is dead at a set $\C$ of configurations, then there is always a certificate that proves it,
namely $\sup{\sem{\dead{U}}}$. In particular, if $\C^\omega$ is a death certificate for $U$
then $\Down{\C^\omega} \subseteq \sem{\dead{U}}$, that is, $\Down{\C^\omega}$ is
an underapproximation of $\sem{\dead{U}}$

Using \Cref{prop:post-presburger}, it is straightforward
to express in \Epres{} arithmetic that a finite set $\C^\omega$ of $\omega$-configurations is a death certificate for $U$:

\begin{restatable}{proposition}{deathcert}\label{prop:death-cert}
For every $k \geq 1$ there is an existential Presburger formula $\DeathCert_k(U, \C^\omega)$ that holds if{}f  $\C^\omega$ is a death certificate of size $k$ for $U$.
\end{restatable}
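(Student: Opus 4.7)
The plan is to exhibit $\DeathCert_k(U,\C^\omega)$ as a quantifier-free Presburger formula (hence trivially in the existential fragment) over an explicit encoding of the $k$ $\omega$-configurations. I would represent each $C_i^\omega$ by a pair of $Q$-indexed tuples $(a_i,b_i)\in\N^Q\times\{0,1\}^Q$ with the convention that $C_i^\omega(q)=a_i(q)$ when $b_i(q)=0$ and $C_i^\omega(q)=\omega$ when $b_i(q)=1$; the side constraint $\bigwedge_{i,q} b_i(q)\le 1$ is already quantifier-free. Under this encoding every primitive predicate I need becomes a Boolean combination of atomic Presburger formulas over the $(a_i,b_i)$: for instance the extended comparison ``$C_i^\omega(q)\le C_j^\omega(q)$'' becomes $b_j(q)=1\lor(b_i(q)=0\land a_i(q)\le a_j(q))$, and analogously ``$C_i^\omega(q)+\effect{t}(q)\le C_j^\omega(q)$'' becomes $b_j(q)=1\lor(b_i(q)=0\land a_i(q)+\effect{t}(q)\le a_j(q))$.

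With this setup I would put $\DeathCert_k(U,\C^\omega):=(\bigwedge_{i,q} b_i(q)\le 1)\land\Phi_1\land\Phi_2$, where the two conjuncts $\Phi_1,\Phi_2$ enforce the two conditions of a death certificate. Condition~1, $\Down{\C^\omega}\models\disa{U}$, is equivalent to each generator $C_i^\omega$ already disabling every $t\in U$ in the $\omega$-sense, because disabling is downward closed and $\pre{t}\in\N^Q$ already lies in $\Down{C_i^\omega}$ whenever $\pre{t}\leq C_i^\omega$; this gives the quantifier-free
\[
\Phi_1 \;\equiv\; \bigwedge_{i=1}^{k}\bigwedge_{t\in U}\bigvee_{q\in Q} \bigl(b_i(q)=0\land a_i(q)<\pre{t}(q)\bigr).
\]
Condition~2, inductiveness of $\Down{\C^\omega}$, looks infinitary but I would reduce it to a finite check on generators via the auxiliary lemma \emph{for every $\omega$-configuration $D^\omega$, $\Down{D^\omega}\subseteq\Down{\C^\omega}$ if{}f $D^\omega\le C_j^\omega$ in the $\omega$-sense for some $j\in[k]$.} Applying this lemma with $D^\omega=C_i^\omega+\effect{t}$ (extended sum), and noting that $\{C+\effect{t}:\pre{t}\le C\le C_i^\omega\}$ equals $\Down{C_i^\omega+\effect{t}}\cap\Up{\post{t}}$, one sees that $\Down{\C^\omega}$ is inductive if{}f for every $i\in[k]$ and every $t\in T$ either $C_i^\omega\not\ge\pre{t}$ in the $\omega$-sense, or there is $j\in[k]$ with $C_i^\omega+\effect{t}\le C_j^\omega$ in the $\omega$-sense. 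This is again a finite Boolean combination of atomic Presburger formulas over the $(a_i,b_i)$, giving $\Phi_2$.

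The main obstacle is the non-trivial direction of the auxiliary lemma: given $D^\omega$ not dominated by any $C_j^\omega$, one must produce a concrete witness $C\in\N^Q$ with $C\le D^\omega$ but $C\not\le C_j^\omega$ for any $j$. I would construct $C$ by keeping each finite coordinate of $D^\omega$ unchanged and replacing each $\omega$-coordinate by a single integer $M$ strictly larger than every finite entry occurring in any $C_j^\omega$. For each $j$, any coordinate $q_j$ witnessing $D^\omega\not\le C_j^\omega$ then also witnesses $C\not\le C_j^\omega$: either $q_j$ was already a finite mismatch and $C(q_j)=D^\omega(q_j)>C_j^\omega(q_j)$, or $D^\omega(q_j)=\omega$, in which case $C_j^\omega(q_j)$ must be finite and hence strictly below $M=C(q_j)$. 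This finishes the reduction and thereby the construction of the quantifier-free formula.
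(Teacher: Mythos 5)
Your construction is correct and essentially the paper's own proof: both conditions reduce to the same formula $\bigwedge_{i=1}^k\bigwedge_{u\in U}\neg\bigl(C_i^\omega\ge\pre{u}\bigr)\;\land\;\bigwedge_{i=1}^k\bigwedge_{t\in T}\bigl(C_i^\omega\ge\pre{t}\Rightarrow\bigvee_{j=1}^k C_i^\omega+\effect{t}\le C_j^\omega\bigr)$, with your $(a_i,b_i)$ flag encoding merely making the $\omega$-comparisons explicit atomic Presburger constraints. The only difference is that you spell out, via the large-constant witness, the generator-level reduction of inductiveness that the paper just asserts from downward-closedness (a fine addition; just note that $M$ should also be chosen at least $\max_q \post{t}(q)$ so that the witness indeed lies in $\Up{\post{t}}$).
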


\section{Splitting a stage} \label{sec:split}
Given a stage $\Stage$, we try to find a set $\C^\omega_1, \ldots, \C^\omega_\ell$ of death certificates for transitions $t_1, \ldots, t_\ell \in T \setminus \Dead(\Stage)$ such that
$\Stage \subseteq \Down{\C^\omega_1} \cup \cdots \cup \Down{\C^\omega_\ell}$. This allows us to split $\Stage$ into $\Stage_1, \ldots, \Stage_\ell$, where $\Stage_i \defeq \Stage \cap \Down{\C^\omega_i}$.

For any fixed size $k \geq 1$ and any fixed $\ell$, we can find death certificates $\C^\omega_1, \ldots, \C^\omega_\ell$ of size at most $k$ by solving a Presburger formula. However, the formula does not belong to the existential fragment, because the inclusion check $\Stage \subseteq \Down{\C^\omega_1} \cup \cdots \cup \Down{\C^\omega_\ell}$ requires universal quantification. For this reason, we proceed iteratively. For every $i \geq 0$,
after having found $\C^\omega_1, \ldots, \C^\omega_i$ we search for a pair $(C_{i+1}, \C^\omega_{i+1})$ such that
\begin{enumerate}[label={(\roman*)}]
\item\label{itm:split1} $\C^\omega_{i+1}$ is a death certificate for some $t_{i+1} \in T \setminus \Dead(\Stage)$;
\item\label{itm:split2} $C_{i+1} \in \Stage \cap \Down{\C^\omega_{i+1}} \setminus (\Down{\C^\omega_1} \cup \cdots \cup \Down{\C^\omega_i})$.
\end{enumerate}
An efficient implementation requires to guide the search for $(C_{i+1}, \C^\omega_{i+1})$, because otherwise the search procedure might not even terminate, or might split $\Stage$ into too many parts, blowing up the size of the stage graph. Our search procedure employs the following heuristic, which works well in practice. We only consider the case $k=1$, and search for a pair $(C_{i+1}, C^\omega_{i+1})$ satisfying \ref{itm:split1} and \ref{itm:split2} above, and additionally:
\begin{enumerate}[resume,label={(\roman*)}]
\item\label{itm:split3} all components of $C^\omega_{i+1}$ are either $\omega$ or between $0$ and $\max_{t \in T,q \in Q} \pre{t}(q)-1$;
\item\label{itm:split4} for every $\omega$-configuration $C^ \omega$, if $(C_{i+1}, C^\omega)$ satisfies \ref{itm:split1}--\ref{itm:split3}, then $C^\omega_{i+1} \leq C^\omega$;
\item\label{itm:split5} for every pair $(C, C^\omega)$, if $(C, C^\omega)$ satisfies \ref{itm:split1}--\ref{itm:split4}, then $C^\omega \leq C^\omega_{i+1}$.
\end{enumerate}
\noindent Condition~\ref{itm:split3} guarantees termination.
Intuitively, condition~\ref{itm:split4} leads to certificates valid for sets $
U \subseteq T \setminus \Dead(\Stage)$ as large as possible. So it allows us to avoid splits that, loosely speaking, do not make as much progress as they could. Condition \ref{itm:split5} allows us to avoid splits with many elements because each element of the split has a small intersection with $\Stage$.

An example illustrating these conditions is given in~\Cref{app:split}.

\section{Computing eventually dead transitions}\label{sec:evdead}

Recall that the function $\EventDead(\Stage)$ takes an inductive Presburger set $\Stage$ as input, and returns a (possibly empty) set $U \subseteq \overline{\Dead(\Stage)}$ of transitions such that $\Stage \models \Diamond \dead{U}$. This guarantees $\Stage \leadsto \sem{\dead{U}}$ and, since $\Stage$ is inductive, also $\Stage \leadsto \Stage \cap \sem{\dead{U}}$.

By \Cref{prop:eventdead-hardness}, deciding if there exists a non-empty set $U$ of transitions such that $\Stage \models \Diamond \dead{U}$ holds is \PSPACE-hard, which makes a polynomial reduction to satisfiability of existential Presburger formulas unlikely.
So we design incomplete implementations of $\EventDead(\Stage)$ with lower complexity. Combining these implementations, the lack of completeness essentially vanishes in practice.

The implementations are inspired by Proposition~\ref{prop:cert}, which shows that $\Stage\leadsto \sem{\dead{U}}$ holds if{}f there exists a certificate $f$ such that:
\begin{equation}
\tag{Cert}
\forall C \in \Stage \setminus \sem{\dead{U}}\ \colon \exists \, C \trans{*} C' \colon f(C) > f(C').
\label{cond:cert}
\end{equation}
To find such certificates efficiently, we only search for \emph{linear} functions $f(C) = \sum_{q \in Q} \vec{a}(q) \cdot C(q)$ with coefficients $\vec{a}(q) \in \N$ for each $q \in Q$.

\subsection{First implementation: Linear ranking functions}\label{asdead-invariants}

Our first procedure computes the existence of a linear \emph{ranking function}.

\begin{definition}\label{def:ranking}
    A function $r \colon \Stage \rightarrow \Nat$ is a ranking function for $\Stage$ and $U$ if for every $C \in \Stage$ and every step $C \trans{t} C'$ the following holds:
    \begin{enumerate}
        \item\label{itm:ranking1} if $t \in U$, then $r(C) > r(C')$; and
        \item\label{itm:ranking2} if $t \notin U$, then $r(C) \geq r(C')$.
    \end{enumerate}
\end{definition}

\begin{proposition}\label{prop:ranking}
  If $r \colon \Stage \to \Nat$ is a ranking function for $\Stage$ and
  $U$, then there exists $k \in \N$ such that $(r, k)$ is a bounded
  certificate for $\Stage \leadsto \sem{\dead{U}}$.
\end{proposition}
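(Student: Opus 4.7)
The plan is to exhibit a uniform bound $k$ depending only on $\calP$ and $U$, by combining Dickson's lemma with the monotonicity of firing in replicated systems. The first step is to observe that $\sem{\dead{U}}$ is downward closed: if $C \in \sem{\dead{U}}$ and $C' \leq C$, then any witnessing execution $C' \trans{w} D$ with $D \geq \pre{t}$ for some $t \in U$ could be lifted by monotonicity to $C \trans{w} D + (C - C')$, a configuration that also enables $t$, contradicting $C \in \sem{\dead{U}}$. Therefore $\overline{\sem{\dead{U}}}$ is upward closed, and by \Cref{lem:updown} it admits a finite basis $\{B_1, \ldots, B_m\}$.

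Next, for each basis element $B_j$, the fact that $B_j \notin \sem{\dead{U}}$ yields a firing sequence $B_j \trans{w_j} B_j''$ with $B_j'' \geq \pre{t_j}$ for some $t_j \in U$. I would then define $k \defeq \max_j (|w_j| + 1)$, which depends only on $\calP$ and $U$. For any $C \in \Stage \setminus \sem{\dead{U}}$, pick $j$ with $C \geq B_j$ (such $j$ exists because $C \in \overline{\sem{\dead{U}}}$). Monotonicity then gives $C \trans{w_j} C + (B_j'' - B_j) \trans{t_j} C''$, an execution of length at most $k$. The ranking conditions yield $r(C) \geq r(C + (B_j'' - B_j)) > r(C'')$: the first inequality by condition~\ref{itm:ranking2} applied step by step along $w_j$ (using inductiveness of $\Stage$ to ensure $r$ is defined at every intermediate configuration), and the strict inequality by condition~\ref{itm:ranking1} applied to the final $U$-transition $t_j$. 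Hence $(r, k)$ is a bounded certificate for $\Stage \leadsto \sem{\dead{U}}$.

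The main obstacle is precisely the \emph{uniformity} of $k$: for an individual $C$ a witness path is easy to produce, but its length a priori depends on $C$, and $\Stage$ typically contains configurations of arbitrarily large size. The resolution is the well-quasi-ordering argument above, which reduces the task to finitely many minimal witnesses and then transports them to arbitrary $C \geq B_j$ via monotonicity. All other ingredients---downward closure of $\sem{\dead{U}}$, inductiveness of $\Stage$ (so that $r$ is defined along the constructed path), and telescoping the ranking inequalities---are routine.
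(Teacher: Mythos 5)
Your proof is correct and follows essentially the same route as the paper's: take the finite basis of the upward-closed set $\overline{\sem{\dead{U}}}$, fix for each basis element a witness execution enabling some transition of $U$, let $k$ bound their lengths, lift the witness to an arbitrary $C$ above a basis element by monotonicity, and telescope the ranking inequalities (weak along the prefix, strict at the final $U$-step). The only difference is cosmetic: you make explicit the downward-closedness of $\sem{\dead{U}}$ and the role of inductiveness of $\Stage$, which the paper leaves implicit.
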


\begin{proof}
  Let $M$ be the minimal finite basis of the upward closed set
  $\overline{\sem{\dead{U}}}$. For every configuration $D \in M$, let
  $\sigma_D$ be a shortest sequence that enables some transition of
  $t_D \in U$ from $D$, \ie, such that $D \trans{\sigma_D} D' \trans{t_D} D''$
  for some $D'$, $D''$. Let $k \defeq \max\{|\sigma_D t_D|
  : D \in M\}$.

  Let $C \in \Stage \setminus \sem{\dead{U}}$. Since $C \in
  \overline{\sem{\dead{U}}}$, we have $C \ge D$ for some $D \in M$. By
  monotonicity, we have $C \trans{\sigma_D} C' \trans{t_D} C''$ for
  some configurations $C'$ and $C''$. By Definition~\ref{def:ranking},
  we have $r(C) \geq r(C') > r(C'')$, and so
  condition~\eqref{cond:cert} holds. As $\size{\sigma_D t_D} \le k$, we
  have that $(r, k)$ is a bounded certificate. \qed
\end{proof}

It follows immediately from Definition~\ref{def:ranking} that if $r_1$
and $r_2$ are ranking functions for sets $U_1$ and $U_2$ respectively,
then $r$ defined as $r(C) \defeq r_1(C) + r_2(C)$ is a ranking function for
$U_1 \cup U_2$. Therefore, there exists a unique maximal set of
transitions $U$ such that $\Stage \leadsto \sem{\dead{U}}$ can be
proved by means of a ranking function. Further, $U$ can be computed
by collecting all transitions $t \in \overline{\Dead(\Stage)}$
such that there exists a ranking function $r_t$ for $\{t\}$. The existence
of a \emph{linear} ranking function $r_t$ can be decided in
polynomial time via linear programming, as follows. Recall that for every step $C
\trans{u} C'$, we have $C' = C + \effect{u}$. So, by linearity, we
have $r_t(C) \geq r_t(C') \iff r_t(C' - C) \leq 0 \iff r_t(\effect{u})
\leq 0$. Thus, the constraints of Definition~\ref{def:ranking} can be
specified as:
\begin{align*}
  \vec{a} \cdot \effect{t} < 0  \quad \land
  \bigwedge_{u \in \overline{\Dead(\Stage)}} \vec{a} \cdot
  \effect{u} \le 0 ,
\end{align*}
where $\vec{a} \colon Q \rightarrow \Q_{\geq 0}$ gives the coefficients of $r_t$, that is,
$r_t(C) = \vec{a} \cdot C$, and
$\vec{a} \cdot \vec{x} \defeq \sum_{q \in Q} \vec{a}(q) \cdot \vec{x}(q)$ for $\vec{x} \in \N^Q$.
Observe that a solution may yield a
function whose codomain differs from $\N$. However, this is not an
issue since we can scale it with the least common denominator of each
$\vec{a}(q)$.

\subsection{Second implementation: Layers}\label{asdead-layers}

\emph{Transitions layers} were introduced in~\cite{BEJM17} as a technique to find transitions that will eventually become dead. Intuitively, a set $U$ of transitions is a layer if (1) no run can contain only transitions of $U$, and (2) $U$ becomes dead once disabled; the first condition guarantees that $U$ eventually becomes disabled, and the second that it eventually becomes dead.
We formalize layers in terms of \emph{layer functions}.

\begin{definition}\label{def:layer}
    A function $\ell \colon \Stage \to \Nat$ is a \emph{layer function} for $\Stage$ and $U$ if:
    \begin{enumerate}[leftmargin=2.4em,label={\textbf{C\arabic*}.},ref={\textbf{C\arabic*}}]
        \item\label{itm:layer1}  $\ell(C) > \ell(C')$  for every $C \in \Stage$ and every step $C \trans{t} C'$ with $t \in U$; and
        \item\label{itm:layer2}  $\sem{\disa{U}} = \sem{\dead{U}}$.
    \end{enumerate}
\end{definition}

\begin{proposition}
  If $\ell \colon \Stage \to \Nat$ is a layer function for $\Stage$ and
  $U$, then $(\ell, 1)$ is a bounded certificate for $\Stage \leadsto
  \sem{\dead{U}}$.
\end{proposition}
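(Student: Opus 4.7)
The plan is to unfold the definition of a bounded certificate and check directly that the pair $(\ell,1)$ satisfies it, using the two conditions \ref{itm:layer1} and \ref{itm:layer2} of a layer function together with the fact that stages are inductive.

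First, I would fix an arbitrary $C \in \Stage \setminus \sem{\dead{U}}$; the task is to exhibit an execution $C \trans{w} C'$ of length at most $1$ with $\ell(C) > \ell(C')$. The key observation is that condition \ref{itm:layer2} lets me convert "not dead" into "enabled": since $\sem{\disa{U}} = \sem{\dead{U}}$, we have $C \notin \sem{\disa{U}}$, so some transition $t \in U$ is actually enabled at $C$. This gives a one-step execution $C \trans{t} C'$.

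Next I would check the two remaining obligations on this step. Because $\Stage$ is a stage and hence inductive (condition~\ref{st1} of Definition~\ref{def:SG}), $C' \in \Stage$, so $\ell(C')$ is well-defined. Condition \ref{itm:layer1} applied to this step, with $t \in U$, then yields $\ell(C) > \ell(C')$. Since $|t| = 1 \le 1$, the pair $(\ell,1)$ meets the definition of a bounded certificate for $\Stage \leadsto \sem{\dead{U}}$.

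There is essentially no obstacle here: the statement is a direct unpacking of definitions, and the only subtlety worth flagging is the use of \ref{itm:layer2} to ensure that outside $\sem{\dead{U}}$ some transition of $U$ is not merely "eventually executable" but already enabled in one step, which is what makes the witness length $k=1$ sufficient. In particular, no appeal to Proposition~\ref{prop:cert} or to reachability across several transitions is needed.
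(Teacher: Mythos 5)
Your proof is correct and matches the paper's argument exactly: use condition~\ref{itm:layer2} to turn membership in $\Stage \setminus \sem{\dead{U}}$ into the existence of an enabled transition $t \in U$, take the resulting one-step execution, and apply condition~\ref{itm:layer1}. The extra remark that inductiveness of $\Stage$ keeps $\ell(C')$ well-defined is a fine (implicit in the paper) detail, but otherwise the two proofs coincide.
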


\begin{proof}
  Let $C \in \Stage \setminus \sem{\dead{U}}$.
  By condition~\ref{itm:layer2}, we have $C \not\in \sem{\disa{U}}$.
  So there exists a step $C \trans{u} C'$ where $u \in U$.
  By condition~\ref{itm:layer1}, we have $\ell(C) > \ell(C')$, so
  condition~\eqref{cond:cert} holds and $(\ell, 1)$ is a bounded certificate.
\end{proof}

Let $\Stage$ be a stage. For every set of transitions $U \subseteq \overline{\Dead(\Stage)}$ we can construct a Presburger formula $\linlayer(U, \vec{a})$ that holds if{}f there there exists a \emph{linear} layer function for $U$, i.e., a layer function of the form $\ell(C) = \vec{a} \cdot C$ for a vector of coefficients $\vec{a} \colon Q \rightarrow  \Q_{\geq 0}$.
Condition~\ref{itm:layer1}, for a linear function $\ell(C)$, is expressed by the existential Presburger formula
\begin{equation*}\label{constraint-cond-1}
\linlayerfun(U, \vec{a}) \defeq \bigwedge_{u \in U} \vec{a} \cdot \effect{u} < 0.
\end{equation*}
Condition~\ref{itm:layer2} is expressible in Presburger arithmetic because of \Cref{prop:dis-and-dead}.
However, instead of computing $\sem{\dead{U}}$ explicitly, there is a more efficient way to express this constraint.
Intuitively, $\sem{\disa{U}} = \sem{\dead{U}}$ is the case if enabling a transition $u \in U$ requires to have previously enabled some transition $u' \in U$. This observation leads to:
\begin{restatable}{proposition}{layerdisabled}\label{prop:layer-disabled}
A set $U$ of transitions satisfies $\sem{\disa{U}} = \sem{\dead{U}}$ if{}f it satisfies the existential Presburger formula
    \begin{align*}
      \diseqdead(U) \defeq \bigwedge_{t \in T} \bigwedge_{u \in U} \bigvee_{u' \in U} \pre{t} + (\pre{u} \mminus \post{t}) \ge \pre{u'}
    \end{align*}
    where $\vec{x} \mminus \vec{y} \in \N^Q$ is defined by $(\vec{x} \mminus \vec{y})(q) \defeq \max(\vec{x}(q) - \vec{y}(q), 0)$ for $\vec{x},\vec{y} \in \N^Q$.
\end{restatable}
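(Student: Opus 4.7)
The plan is to reduce the equality $\sem{\disa{U}} = \sem{\dead{U}}$ to a statement about single-step closure of $\sem{\disa{U}}$, and then to translate this single-step statement into a quantifier-free statement about a canonical minimal configuration, which is precisely what $\diseqdead(U)$ expresses.

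First I would observe that $\sem{\dead{U}} \subseteq \sem{\disa{U}}$ holds by definition, so $\sem{\disa{U}} = \sem{\dead{U}}$ amounts to $\sem{\disa{U}}$ being inductive, and inductiveness of a set of configurations is equivalent to closure under single transition steps. Thus the equality holds if{}f for every $C \in \sem{\disa{U}}$ and every step $C \trans{t} C'$ with $t \in T$, we still have $C' \in \sem{\disa{U}}$. Contrapositively, the equality \emph{fails} if{}f there exist $t \in T$, $u \in U$ and a configuration $C$ such that (a) $C \geq \pre{t}$, (b) $C + \effect{t} \geq \pre{u}$, and (c) $C$ disables every transition of $U$. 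So the equality holds if{}f, for every $t \in T$ and every $u \in U$, every $C$ satisfying (a) and (b) must enable some $u' \in U$.

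The core step is then to identify the pointwise minimum of the upward-closed set defined by (a) and (b). For each component $q \in Q$, condition (a) forces $C(q) \geq \pre{t}(q)$ and condition (b) forces $C(q) \geq \pre{u}(q) - \post{t}(q) + \pre{t}(q)$. The pointwise maximum of these two lower bounds is $\pre{t}(q) + \max(\pre{u}(q) - \post{t}(q),\, 0)$, that is, $\pre{t} + (\pre{u} \mminus \post{t})$ globally. Because the set of $C$ meeting (a) and (b) is upward closed and $\pre{u'}$ is also an upward bound, the universal condition ``every such $C$ enables some $u' \in U$'' is equivalent to ``the minimum element enables some $u' \in U$'', i.e.\ $\pre{t} + (\pre{u} \mminus \post{t}) \geq \pre{u'}$ for some $u' \in U$. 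Conjoining over all $t \in T$ and $u \in U$ and disjoining over $u' \in U$ yields exactly $\diseqdead(U)$.

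I do not expect a real obstacle; the only point needing a line of care is that in the case $t \in U$ the conjunct is trivially satisfied by taking $u' = t$, since $\pre{t} + (\pre{u} \mminus \post{t}) \geq \pre{t}$, so including $t \in U$ in the outer conjunction does not strengthen the formula and matches the semantic condition (where such a step $C \trans{t} C'$ from a $C \in \sem{\disa{U}}$ cannot exist anyway). Together, the two directions of the equivalence follow: the semantic ``$\Rightarrow$'' direction instantiates the universally quantified closure property at the minimal witness, and the ``$\Leftarrow$'' direction uses upward monotonicity to lift the inequality at the minimum to every $C$ satisfying (a) and (b). \qed
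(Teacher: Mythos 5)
Your proposal is correct and follows essentially the same route as the paper's proof: reduce the equality to inductiveness of $\sem{\disa{U}}$, observe that the configurations enabling a step $C \trans{t} C'$ with $C'$ enabling some $u \in U$ form an upward-closed set whose minimal element is exactly $\pre{t} + (\pre{u} \mminus \post{t})$, and check membership in the upward-closed set of configurations enabling some $u' \in U$ only at these minima. The paper phrases this via the sets $\mathcal{Y}(U,t)$ and their bases $\mathcal{X}(U,t)$, but the argument is the same.
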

\noindent This allows us to give the constraint $\linlayer(U, \vec{a})$, which is of polynomial size:
\begin{equation*}
    \linlayer(U, \vec{a}) \defeq \linlayerfun(U, \vec{a})  \wedge \diseqdead(U).
\end{equation*}

\subsection{Comparing ranking and layer functions}\label{asdead-combined}

The ranking and layer functions of Sections~\ref{asdead-invariants} and~\ref{asdead-layers} are  incomparable in power, that is,
there are sets of transitions for which a ranking function but no
layer function exists, and vice versa.  This is shown by the following two
systems:
\begin{alignat*}{5}
\calP_1 &= (\ \{\ \text{A}, \text{B}, \text{C}\ \},&\ &\{\
    t_1 \colon \text{A} \, \text{B} \tr \text{C} \, \text{C},\,
    t_2 \colon \text{A} \tr \text{B},\,
    t_3 \colon \text{B} \tr \text{A}
\ \}&\ ), \\
\calP_2 &= (\ \{\ \text{A}, \text{B}\ \},& &\{\
    t_4 \colon \text{A} \, \text{B} \tr \text{A} \, \text{A},\,
    t_5 \colon \text{A} \tr \text{B}
\ \}&\ ).
\end{alignat*}

Consider the system $\calP_1$, and let $\Stage = \N^Q$, i.e.,
$\Stage$ contains all configurations. Transitions $t_2$ and $t_3$ never become
dead at $\multiset{\text{A}}$ and can thus never be included in any $U$.
Transition $t_1$ eventually becomes dead, as shown by
the linear ranking function $r(C) = C(\text{A}) + C(\text{B})$ for $U = \{t_1\}$.
But for this $U$, the condition~\ref{itm:layer2} for layer functions is not satisfied,
as $\sem{\disa{U}} \ni \multiset{\text{A}, \text{A}} \trans{t_2} \multiset{\text{A}, \text{B}} \not\in \sem{\disa{U}}$,
so $\sem{\disa{U}} \neq \sem{\dead{U}}$.
Therefore no layer function exists for this $U$.

Consider now the system $\calP_2$, again with $\Stage = \N^Q$,
and let $U = \{t_5\}$.
Once $t_5$ is disabled, there is no agent in $\text{A}$, so both $t_4$ and $t_5$ are dead.
So $\sem{\disa{U}} = \sem{\dead{U}}$.
The linear layer function $\ell(C) = C(\text{A})$ satisfies $\linlayerfun(U, \vec{a})$,
showing that $U$ eventually becomes dead.
As $C \trans{t_4 t_5} C$ for $C = \multiset{\text{A}, \text{B} }$,
there is no ranking function $r$ for this $U$, which would need to satisfy $r(C) < r(C)$.

For our implementation of $\EventDead(\Stage)$, we therefore combine both approaches.
We first compute (in polynomial time) the unique maximal set $U$ for which there is a linear ranking function.
If this $U$ is non-empty, we return it, and otherwise compute a set $U$ of maximal size for which there is a linear layer function.

\section{Experimental results} \label{sec:eval}
\newcommand{\RedStar}{{\color{red} *}}
\newcommand{\UsedOverapprox}{\hyperlink{UsedOverapproxExplained}{\RedStar}}

\begin{figure}[htbp]
    \noindent
    \begin{minipage}[t]{.52\linewidth}
        \begin{tabular}[t]{|p{3.5cm}|r|r|r|} \hline
            \rowcolor{lightergray}\multicolumn{4}{|c|}{\textbf{Population protocols} (correctness)}\\[0.05cm]
            \rowcolor{lightergray}\textbf{Parameters} & $|Q|$ & $|T|$ & \textbf{Time}\\ \hline \hline

            \multicolumn{4}{|l|}{Broadcast \cite{CDFS11,BEJM17} \UsedOverapprox} \\ \hline
            & 2 & 1 & $<1s$ \\ \hline \hline

            \multicolumn{4}{|l|}{Majority (\Cref{ex:majority})\cite{BEJM17} \UsedOverapprox} \\ \hline
            & 4 & 4 & $<1s$ \\ \hline \hline

            \multicolumn{4}{|l|}{Majority \cite[Ex.~3]{BEK18} \UsedOverapprox} \\ \hline
            & 5 & 6 & $<1s$ \\ \hline \hline

            \multicolumn{4}{|l|}{Majority \cite{AGV15} (``fast \& exact'')} \\ \hline
            $m$=13, $d$=1& 16 & 136 & $4s$ \\
            $m$=21, $d$=1 (TO: 23,1)& 24 & 300 & $466s$ \\
            $m$=21, $d$=20 (TO: 23,22)& 62 & 1953 & $3301s$ \\ \hline \hline

            \multicolumn{4}{|l|}{Flock-of-birds \cite{CMS10,BEJM17} \UsedOverapprox: $x \ge c$} \\ \hline
            $c = 20$ & 21 & 210 & $5s$ \\
            $c = 40$ & 41 & 820 & $45s$ \\
            $c = 60$ & 61 & 1830 & $341s$ \\
            $c = 80$ (TO: $c = 90$)& 81 & 3240 & $1217s$ \\ \hline \hline

            \multicolumn{4}{|l|}{Flock-of-birds \cite[Sect.~3]{BEJ18}: $x \ge c$} \\ \hline
            $c = 60$ & 8 & 18 & $15s$ \\
            $c = 90$ & 9 & 21 & $271s$ \\
            $c = 120$ (TO: $c = 127$)& 9 & 21 & $2551s$ \\ \hline \hline

            \multicolumn{4}{|l|}{Flock-of-birds \cite[\emph{threshold-n}]{CDFS11,BEJM17} \UsedOverapprox: $x \ge c$} \\ \hline
            $c = 10$ & 11 & 19 & $<1s$ \\
            $c = 15$ & 16 & 29 & $1s$ \\
            $c = 20$ (TO: $c = 25$)& 21 & 39 & $18s$ \\ \hline \hline

            \multicolumn{4}{|l|}{Threshold \cite{AADFP04}\cite[$v_{\mathrm{max}}{=}c+1$]{BEJM17} \UsedOverapprox: $\vec{a} \cdot \vec{x} \geq c$} \\ \hline
            $c=2$& 28 & 288 & $7s$ \\
            $c=4$ & 44 & 716 & $26s$ \\
            $c=6$ & 60 & 1336 & $107s$ \\
            $c=8$ (TO: $c=10$)& 76 & 2148 & $1089s$ \\ \hline \hline

            \multicolumn{4}{|l|}{Threshold \cite{BEJ18} (``succinct''): $\vec{a} \cdot \vec{x} \geq c$} \\ \hline
            $c = 7$ & 13 & 37 & $2s$ \\
            $c = 31$ & 17 & 55 & $11s$ \\
            $c = 127$ & 21 & 73 & $158s$ \\
            $c = 511$ (TO: $c = 1023$)& 25 & 91 & $2659s$ \\ \hline \hline

            \multicolumn{4}{|l|}{Remainder \cite{BEJM17} \UsedOverapprox: $\vec{a} \cdot \vec{x} \equiv_m c$} \\ \hline
            $m = 5$ & 7 & 20 & $<1s$ \\
            $m = 15$ & 17 & 135 & $34s$ \\
            $m = 20$ (TO: $m = 25$)& 22 & 230 & $1646s$ \\ \hline
        \end{tabular}
    \end{minipage}%
    \begin{minipage}[t]{.48\linewidth}
        \begin{tabular}[t]{|p{3.3cm}|r|r|r|} \hline
            \rowcolor{lightergray}\multicolumn{4}{|c|}{\textbf{Population protocols} (stable cons.)}\\[0.05cm]
            \rowcolor{lightergray}\textbf{Parameters} & $|Q|$ & $|T|$ & \textbf{Time} \\ \hline \hline

            \multicolumn{4}{|l|}{Approx. majority \cite{Cardelli2012} (Cell cycle sw.) \UsedOverapprox} \\ \hline
            & 3 & 4 & $<1s$ \\ \hline \hline

            \multicolumn{4}{|l|}{Approx. majority \cite{LLMR17} (Coin game) \UsedOverapprox} \\ \hline
            $k=3$ & 2 & 4 & $<1s$ \\ \hline \hline

            \multicolumn{4}{|l|}{Approx. majority \cite{Mor58} (Moran proc.) \UsedOverapprox} \\ \hline
            & 2 & 2 & $<1s$ \\ \hline

            \multicolumn{4}{l}{} \\[-0.19cm] \hline

            \rowcolor{lightergray}\multicolumn{4}{|c|}{\textbf{Leader election/Mutex algorithms}}\\[0.05cm]
            \rowcolor{lightergray}\textbf{Processes} & $|Q|$ & $|T|$ & \textbf{Time} \\ \hline \hline

            \multicolumn{4}{|l|}{Leader election \cite{IsraeliJ90} (Israeli-Jalfon)} \\ \hline
            20 & 40 & 80 & $7s$\\
            60 & 120 & 240 & $1493s$\\
            70 (TO: $80$) & 140 & 280 & $3295s$\\ \hline \hline

            \multicolumn{4}{|l|}{Leader election \cite{LeaderHerman} (Herman)} \\ \hline
            21 & 42 & 42 & $9s$\\
            51 & 102 & 102 & $300s$\\
            81 (TO: $91$) & 162 & 162 & $2800s$\\ \hline \hline

            \multicolumn{4}{|l|}{Mutex \cite{MutexArray} (Array)} \\ \hline
            2 & 15 & 95 & $2s$\\
            5 & 33 & 239 & $5s$\\
            10 (TO: $11$) & 63 & 479 & $938s$\\ \hline \hline

            \multicolumn{4}{|l|}{Mutex \cite{MutexBurns} (Burns)} \\ \hline
            2 & 11 & 75 & $1s$\\
            4 & 19 & 199 & $119s$\\
            5 (TO: $6$) & 23 & 279 & $2232s$\\ \hline \hline

            \multicolumn{4}{|l|}{Mutex \cite{MutexDijkstra} (Dijkstra)} \\ \hline
            2 & 19 & 196 & $66s$ \\
            3 (TO: $4$)& 27 & 488 & $3468s$ \\ \hline \hline

            \multicolumn{4}{|l|}{Mutex \cite{MutexLehmannRabin} (Lehmann Rabin)} \\ \hline
            2 & 19 & 135 & $3s$\\
            5 & 43 & 339 & $115s$\\
            9 (TO: $10$) & 75 & 611 & $2470s$\\ \hline \hline

            \multicolumn{4}{|l|}{Mutex \cite{Peterson81} (Peterson)} \\ \hline
            2 & 13 & 86 & $2s$ \\ \hline \hline

            \multicolumn{4}{|l|}{Mutex \cite{MutexSzymanski} (Szymanski)} \\ \hline
            2 & 17 & 211 & $10s$ \\
            3 (TO: $4$)& 24 & 895 & $667s$ \\ \hline
        \end{tabular}
    \end{minipage}
    \caption{Columns $|Q|$, $|T|$, and \textbf{Time} give the number of states and non-silent transitions, and the time for verification. Population protocols are verified for an infinite set of configurations. For parametric families, the smallest instance that could not be verified within one hour is shown in brackets, e.g. (TO: $c = 90$). Leader election and mutex algorithms are verified for one configuration. The number of processes leading to a timeout is given in brackets, e.g. (TO: 10).}
    \label{fig:eval}
\end{figure}

We implemented the procedure of \Cref{sec:stagegraph_construction} on top of the SMT solver \emph{Z3}~\cite{z3}, and use the Owl~\cite{Owl} and HOA~\cite{HOA} libraries for translating LTL formulas.
The resulting tool automatically constructs stage graphs that verify stable termination properties for replicated systems. We evaluated it on two sets of benchmarks, described below. The first set contains population protocols, and the second leader election and mutual exclusion algorithms. All tests where performed on a machine with an Intel Xeon CPU E5-2630 v4 @ 2.20GHz and 8GB of RAM.
The results are depicted in \Cref{fig:eval} and can be reproduced by the certified
artifact~\cite{Artifact}.
For parametric families of replicated systems, we always report the
largest instance that we were able to verify with a timeout of one hour.
\hypertarget{UsedOverapproxExplained}{For $\IndOverapprox$, from the approaches in \Cref{sec:setconfdead}, we use $\IndOverapprox^0$ in the examples marked with \RedStar{} and $\IndOverapprox^\infty$ otherwise.}
Almost all constructed stage graphs are a chain with at most 3 stages. The only exceptions are the stage graphs for the approximate majority protocols that contained a binary split and 5 stages.
The size of the Presburger formulas increases with increasing size of the replicated system. In the worst case, this growth can be exponential.
However, the growth is linear in all examples marked with \RedStar{}.

\paragraph{Population protocols.} Population protocols~\cite{AADFP04,AADFP06} are replicated systems that compute Presburger predicates following the computation-as-consensus paradigm~\cite{AAER07}. Depending on whether the initial configuration of agents satisfies the predicate or not, the agents of a correct protocol eventually agree on the output ``yes'' or ``no'', almost surely. \Cref{ex:majority} can be interpreted as a population protocol for the majority predicate $\AY > \AN$, and the two stable termination properties that verify its correctness are described in \Cref{ex:majority-property}. To show that a population protocol correctly computes a given predicate, we thus construct two Presburger stage graphs for the two corresponding stable termination properties.  In all these examples, correctness is proved for an infinite set of initial configurations.

Our set of benchmarks contains a broadcast protocol \cite{CDFS11}, three majority protocols (\Cref{ex:majority}, \cite[Ex.~3]{BEK18}, \cite{AGV15}), and multiple instances of parameterized families of protocols, where each protocol computes a different instance of a parameterized family of predicates\footnote{Notice that for each protocol we check correctness for all inputs; we cannot yet automatically verify that infinitely many protocols are correct, each of them for all possible inputs.}. These include various \emph{flock-of-birds} protocol families (\cite{CMS10}, \cite[Sect.~3]{BEJ18}, \cite[\emph{threshold-n}]{CDFS11}) for the family of predicates $x \geq c$ for some constant $c \geq 0$; two families for threshold predicates of the form $\vec{a} \cdot \vec{x} \geq c$~\cite{AADFP04,BEJ18}; and one family for remainder protocols of the form $\vec{a} \cdot \vec{x} \equiv_m c$~\cite{BEJM17}. Further, we check approximate majority protocols (\cite{Cardelli2012,Mor58}, \cite[\emph{coin game}]{LLMR17}). As these protocols only compute the predicate with large probability but not almost surely, we only verify that they always converge to a stable consensus.

\paragraph{Comparison with \cite{BEJM17}.}
The approach of \cite{BEJM17} can only be applied to so-called \emph{strongly-silent} protocols. However, this class does not contain many fast and succinct protocols recently developed for different tasks~\cite{AlistarhG18,BEGHJ20,BEJ18}.

We are able to verify all six protocols reported in \cite{BEJM17}. Further, we are also able to verify the fast Majority \cite{AGV15} protocol as well as the succinct protocols Flock-of-birds \cite[Sect.~3]{BEJ18} and Threshold \cite{BEJ18}. All three protocols are not strongly-silent. Although our approach is more general and complete, the time to verify many strongly-silent protocol does not differ significantly between the two approaches. Exceptions are the Flock-of-birds \cite{CMS10} protocols where we are faster (\cite{BEJM17} reaches the timeout at $c=55$) as well as the Remainder and the Flock-of-birds-threshold-$n$ protocols where we are substantially slower (\cite{BEJM17} reaches the timeout at $m=80$ and $c=350$, respectively). Loosely speaking, the approach of \cite{BEJM17} can be faster because they compute inductive overapproximations using an iterative procedure instead of $\PReach$. In some instances already a very weak overapproximation, much less precise than $\PReach$, suffices to verify the result. Our procedure can be adapted to accommodate this (it essentially amounts to first running the procedure of \cite{BEJM17}, and if it is inconclusive then run ours).

\paragraph{Other distributed algorithms.}
We have also used our approach to verify arbitrary LTL liveness properties of non-parameterized systems with arbitrary communication structure. For this we apply
standard automata-theoretic techniques and construct a product of the system and a \emph{limit-deterministic Büchi automaton} for the negation of the property. Checking that no fair runs of the product are accepted by the automaton reduces to checking a stable termination property.

Since we only check correctness of one single finite-state system, we can also apply a probabilistic model checker based on state-space exploration. However, our technique delivers a stage graph,
which plays two roles. First, it gives an explanation of why the property holds in terms of invariants and ranking functions, and second, it is a certificate of correctness that can be efficiently checked by independent means.

We verify liveness properties for several leader election and mutex algorithms from the literature~\cite{MutexDijkstra,MutexArray,LeaderHerman,IsraeliJ90,MutexLehmannRabin,MutexBurns,Peterson81,MutexSzymanski} under the assumption of a probabilistic scheduler. For the leader election algorithms, we check that a leader is eventually chosen; for the mutex algorithms, we check that the first process enters its critical section infinitely often.

\paragraph{Comparison with PRISM~\cite{KNP11}.}
We compared execution times for verification by our technique and by PRISM on the same models. While PRISM only needs a few seconds to verify instances of the mutex algorithms \cite{MutexDijkstra,MutexArray,MutexLehmannRabin,MutexBurns,Peterson81,MutexSzymanski} where we reach the time limit, it reaches the memory limit for the two leader election algorithms~\cite{LeaderHerman,IsraeliJ90} already for 70 and 71 processes, which we can still verify.

\section{Conclusion and further work} \label{sec:conclusion}
We have presented stage graphs, a sound and complete technique for the verification of stable termination properties of replicated systems, an important class of parameterized systems. Using deep results of the theory of Petri nets, we have shown that Presburger stage graphs, a class of stage graphs whose correctness can be reduced to the satisfiability problem of Presburger arithmetic, are also sound and complete. This provides a decision procedure for the verification of termination properties, which is of theoretical nature since it involves a blind enumeration of candidates for Presburger stage graphs. For this reason, we have presented a technique for the algorithmic construction of Presburger stage graphs, designed to exploit the strengths of SMT-solvers for existential Presburger formulas, i.e., integer linear constraints. Loosely speaking,  the technique searches for \emph{linear} functions certifying the progress between stages, even though only the much larger class of Presburger functions guarantees completeness.

We have conducted extensive experiments on a large set of benchmarks. In particular, our approach is able to prove correctness of nearly all the standard protocols described in the literature, including several protocols that could not be proved by the technique of~\cite{BEJM17}, which only worked for so-called strongly-silent protocols. We have also successfully applied the technique to some self-stabilization algorithms, leader election and mutual exclusion algorithms.

Our technique is based on the mechanized search for invariants and ranking functions. It avoids the use of state-space exploration as much as possible. For this reason, it also makes sense as a technique for the verification of liveness properties of non-parameterized systems with a finite but very large state space.

%
%
%
\bibliographystyle{splncs04}
\bibliography{bibliography}

\clearpage
\appendix
\section{Appendix} \label{sec:appendix}
\subsection[Missing proofs for Section 2]{Missing proofs for \Cref{sec:preliminaries}}\label{app:preliminaries}

We show that the qualitative model checking problem is not semi-decidable.
The result holds even for the subclass of replicated systems of arity 2 (i.e., for population protocols)
and when $\Init = \sem{\varphi_1}$ and the LTL formula is of the form $\varphi = \Box \varphi_2 \lor \Diamond \varphi_3$,
where $\varphi_1,\varphi_2$ and $\varphi_3$ are quantifier-free Presburger predicates with atomic formulas of the form $q{=}0$, $q{=}1$, or $q{\ge}1$ for $q \in Q$.

\qualundecidable*
\begin{proof}
A two-counter \emph{Minsky machine} $\mathcal{M}$ is a finite sequence of labeled instructions
\[
   \ell_1 : \mathit{ins}_1, \quad \ldots \quad ,\,\ell_m : \mathit{ins}_m,\, \ell_{m+1} : \textbf{halt}
\]
where every $\mathit{ins}_i$ is either a \emph{Type~I} instruction of the form
\[
    \textbf{inc}\ c_j;\ \textbf{goto}\ \ell_k
\]
where $j \in \{1,2\}$ and $1\leq k \leq m+1$, or a \emph{Type~II} instruction of the form
\[
\textbf{if}\ c_j{=}0 \ \textbf{then\ goto}\ \ell_k \ \textbf{else\ dec}\ c_j;\ \textbf{goto}\ \ell_n
\]
where $j \in \{1,2\}$ and $1\leq k,n \leq m+1$.

A computation of $\mathcal{M}$ starts by executing the first instruction with both counters $c_1$ and $c_2$ initialized to zero. The problem of determining whether $\mathcal{M}$ \emph{halts}, i.e., eventually executes the $\textbf{halt}$ instruction, is undecidable~\cite{Minsky:book}. Consequently, the problem of whether $\mathcal{M}$ does \emph{not} halt is not even semi-decidable.

We prove our theorem by reducing the non-halting problem for two-counter Minsky machines to the qualitative model checking problem. For a given Minsky machine~$\mathcal{M}$, let $L_{I}$ and $L_{II}$ be the sets of all indices $i \in \{1,\ldots,m\}$ such that $\mathit{ins}_i$ is a Type~I and Type~II instruction, respectively.
We construct a replicated system $\calP = (Q,T)$ where
\[
 Q \ \defeq\ \{q_1,\ldots,q_{m+1},Z_1,O_1,Z_2,O_2\} \cup \{\hat{q}_i \mid i \in L_{II}\},
\]
and $T$ is the (least) set of transitions satisfying the following:
\begin{itemize}
	\item For every Type~I instruction of the form
	\[
        \ell_i:\ \textbf{inc}\ c_j;\ \textbf{goto}\ \ell_k
	\]
	there is a transition $q_i \, Z_j \tr q_k \, O_j$.
	\item For every Type~II instruction of the form
	\[
        \ell_i:\ \textbf{if}\ c_j{=}0 \ \textbf{then\ goto}\ \ell_k \ \textbf{else\ dec}\ c_j;\ \textbf{goto}\ \ell_n
	\]
	there are transitions $q_i \, Z_j \tr \hat{q}_i \, Z_j$, $\hat{q}_i \, Z_j \tr q_k \, Z_j$, and $q_i \, O_j \tr q_n \, Z_j$.
\end{itemize}
Consider the following Presburger formulas:
\begin{align*}
    \mathit{Init}  \ &\defeq \ q_1 {=} 1 \ \land\ Z_1{\ge}1 \ \land\ Z_2{\ge}1 \ \land\ \bigwedge_{q \in Q \setminus \{q_1,Z_1,Z_2 \}} q {=} 0 \\
	\mathit{Overflow} \  &\defeq \ \bigvee_{i \in L_I} (q_i {=} 1 \wedge Z_{j_i}{=} 0)\\
	\mathit{Cheat}  \ &\defeq \ \bigvee_{i \in L_{II}} (\hat{q}_i {=} 1 \wedge O_{j_i} {\geq} 1)
\end{align*}
In the above formulas, we use $j_i \in \{1,2\}$ to denote the counter used by instruction $\mathit{ins}_i$.
Furthermore, let $\Init \defeq \sem{\mathit{Init}}$ and
\[
    \varphi \ \defeq\  \Box (q_{m+1} {=}0) \ \lor\  \Diamond (\mathit{Overflow} \lor \mathit{Cheat}).
\]
We claim that $\mathcal{M}$ does not halt iff $\Pr[C, \varphi] = 1$ for every configuration $C \in \Init$.

Suppose $\mathcal{M}$ does not halt. Let $C \in \Init$. As $C$ satisfies the formula $\mathit{Init}$, it has precisely one agent in state $q_1$, at least one agent in each state $Z_1$ and $Z_2$, and no agents elsewhere. The transitions of $\calP$ are constructed so that they allow for simulating $\mathcal{M}$ from~$C$. In every configuration $C'$ reachable from $C$, there is precisely one agent in a state of $\{q_1,\ldots,q_{m+1}\} \cup \{\hat{q}_i \mid i \in L_{II}\}$, and the values of $c_1$ and $c_2$ are represented by $C'(O_1)$ and $C'(O_2)$, respectively. Since $C'(O_1)$ and $C'(O_2)$ are bounded, the simulation may fail due to a \emph{counter overflow} when some Type~I instruction tries to increase a counter $c_j$ (i.e., rewrite $Z_j$ into $O_j$) but no agent in $Z_j$ is available. This is captured by the formula $\mathit{Overflow}$. Furthermore, the simulation of a Type~II instruction is not necessarily faithful, because the transition $q_i \, Z_j \tr \hat{q}_i \, Z_j$ can be executed even if there is an agent in state $O_j$ in the current configuration (i.e., the corresponding counter value is positive). This is detected by the formula $\mathit{Cheat}$. Hence, if $\mathcal{M}$ does not halt, then every run initiated in $C$ either does \emph{not} correspond to a faithful simulation of $\mathcal{M}$, i.e., visits a configuration satisfying $\mathit{Overflow}$ or $\mathit{Cheat}$, or simulates $\mathcal{M}$ faithfully, i.e., the state $q_{m+1}$ does not occur in any configuration visited by the run. Hence, all runs initiated in $C$ satisfy the formula $\varphi$.

If $\mathcal{M}$ halts, then the instruction $\textbf{halt}$ is executed after a \emph{finite} computation along which the counters are increased only to some finite values. Hence, for all sufficiently large $n$, there exist a configuration $C \in \Init$ with $C(Z_1) = C(Z_2) = n$ and a finite path initiated in $C$ corresponding to a faithful simulation of $\mathcal{M}$. Note that the last configuration $C'$ of this path (where $C'(q_{m+1}) = 1$) has only one successor $C'$, i.e., the self-loop $C' \trans{} C'$ is inevitably selected with probability~$1$. The probability of executing this path (and the associated run) is positive, and the run does not satisfy the formula $\varphi$. Hence, $\Pr[C,\varphi] <  1$.
\end{proof}

\subsection[Missing proofs for Section 3]{Missing proofs for \Cref{sec:stagegraphs}}\label{app:stagegraphs}

\propCert*

\begin{proof}
\noindent ($\Rightarrow$): Assume $\C$ leads to $\C'$. By definition of the ``leads to'' relation, for every $C \in \C$, there exists $C' \in \C'$ such that $C \trans{*} C'$. Hence, the function defined by $f(C) = 1$ if $C \in \C \setminus\C'$, and $f(C)=0$ otherwise is a certificate for $\C \leadsto \C'$. \medskip

\noindent ($\Leftarrow$): Assume there is a certificate for $\C \leadsto \C'$. We claim that for every $C \in \C$, there exists $C' \in \C'$ such that $C \trans{*} C'$.  Assume the contrary. By assumption, the definition of certificates and as $\C$ is inductive, there are configurations $C_0, C_1, \ldots \in \C \setminus \C'$ such that $C = C_0 \trans{*} C_1 \trans{*} \cdots$ and $f(C_i) > f(C_{i+1})$ for every $i \geq 0$. This is impossible as the codomain of $f$ is $\N$, which proves the claim.

We may now prove that $\C$ leads to $\C'$. Let $\rho$ be a fair run
starting at some configuration of $\C$. Since $\C$ is inductive,
$\rho$ only visits configurations of $\C$. Further, since all
configurations visited by $\rho$ have the same size, some
configuration $C \in \C$ is visited infinitely often.  By the claim,
there exists a sequence $C \trans{\sigma} C'$ such that
$C' \in \C'$. We show that $\rho$ visits $C'$, by induction on
$|\sigma|$. If $|\sigma| = 0$, then $C' = C$ and we are done. Assume
$\sigma = t \tau$ and $C \trans{t} D \trans{\tau} C'$, where $t \in
T$. By fairness, $D$ occurs infinitely often in $\rho$. Since
$|\tau| = |\sigma|-1$, we can apply the induction hypothesis to $\tau$
and conclude that $C'$ occurs infinitely often in $\rho$. \qed
\end{proof}

\propComp*

\begin{proof}
$(\Leftarrow$): Assume $\calP$ has a stage graph for $\Pi$.
Let $\Bottom$ be a terminal stage of the stage graph. By condition~\ref{st4}, $\Bottom \models \phipost^i$ for some $i$, and by inductiveness $\Bottom \models \Box \phipost^i$
Let $\Leaves$ be the union of the terminal stages of the stage graph. We have $\Leaves \models \bigvee_{i=1}^k \Box \phipost^i$.
By Proposition~\ref{prop:cert} and conditions~\ref{st1} and~\ref{st3} of the definition of a stage graph, every stage $\C$ satisfies $\C \leadsto \Leaves$.
Therefore every stage $\C$ satisfies $\Diamond \bigvee_{i=1}^k \Box \phipost^i$.
By condition~\ref{st2}, we have that $\sem{\phipre} \models \Diamond \bigvee_{i=1}^k \Box \phipost^i$. Thus
$C \models \Diamond \bigvee_{i=1}^k \Box \phipost^i$ for any configuration $C \in \sem{\phipre}$, and hence $\calP \models \varphi_\Pi$.\medskip

\noindent $(\Rightarrow$): Assume $\calP$ satisfies $\Pi$. Consider a stage graph with $k+1$ stages: an initial stage $\C_{in}$ containing the set of all configurations reachable from $\sem{\phipre}$,
and a terminal stage $\C_{f_i}$, for every $1 \le i \le k$, containing all configurations satisfying $\Box \phipost^i$. Conditions~\ref{st1}, \ref{st2}, and~\ref{st4} hold by definition.
Since $\calP$ satisfies $\Pi$, every fair run from a configuration of the initial stage $\C_{in}$ eventually visits a terminal stage $\C_{f_i}$,
and therefore $\C_{in}$ leads to $(\C_{f_1} \cup \ldots \cup \C_{f_k})$. Consequently, \Cref{prop:cert} yields a certificate for $\C_{in} \leadsto (\C_{f_1} \cup \ldots \cup \C_{f_k})$. \qed
\end{proof}

\thmStageGraph*

\begin{proof}
  \newcommand{\I}{\mathcal{I}}%
  \newcommand{\B}{\mathcal{B}}%
  \renewcommand{\R}{\mathcal{R}}%

  We say that a configuration $C$ is \emph{bottom} if $C \trans{*} D$
  implies $D \trans{*} C$. Let $\B$ be the set of all bottom
  configurations of $\PP$. Let $\xleftrightarrow{*}$ denote the
  \emph{mutual reachability relation} defined by $C
  \xleftrightarrow{*} D \defiff (C \trans{*} D \land D \trans{*}
  C)$. It is known from~\cite[Thm.~13 and Prop.~14]{EGLM17} that both
  $\xleftrightarrow{*}$ and $\B$ are (effectively) \Epres. Let $\S_i
  \defeq \B \cap \sem{\Box \phipost^i}$ for every $i \in [n]$. Note
  that $\S_i$ is \Epres{} since it can be written as
  $$ \S_i = \left\{C \in \B : \forall D\ [(C \xleftrightarrow{*} D)
    \implies (D \models \phipost^i)]\right\}.
  $$ Let $\S \defeq \S_1 \cup \cdots \cup \S_n$ and let $\I \defeq
  \sem{\phipre}$. Note that $\S$ and $\I$ are \Epres{}. Since $\PP$
  satisfies $\Pi$, we have $\postC^*(\I) \cap (\B \setminus \S) =
  \emptyset$. Therefore, by~\cite[Lem.~9.1]{Ler12}, there exists an
  inductive \Epres{} set $\I' \supseteq \I$ such that $\postC^*(\I')
  \cap (\B \setminus \S) = \emptyset$. Since any set of configurations
  leads to $\B$, this implies $\I' \leadsto \S$.

  The directed acyclic graph made of $\I'$ with $\S_1, \ldots, \S_n$
  as its successors is a stage graph for $\Pi$. Indeed:
  \begin{enumerate}
  \item $\I'$ and $\S_i$ are inductive, where the latter follows by
    definition;

  \item $\sem{\phipre} = \I$ is a subset of stage $\I'$;

  \item $\I' \leadsto \S = (\S_1 \cup \dots \cup \S_n)$ holds, by the above;
    and

  \item $\S_i \models \phipost^i$ by definition of $\S_i$.
  \end{enumerate}

  It remains to exhibit a \Epres{} certificate. Since $\I'$ and $\S$
  are both \Epres, \cite[Cor.~XI.3]{Ler13} yields a bounded language
  $L = w_1^* w_2^* \cdots w_k^* \subseteq T^*$ such that $\I'
  \subseteq \preC_L(\S)$. Let $\preC_L(\S)$ be the set of configurations
  $C$ such that $C \trans{w} C'$ for some $C' \in \S$ and $w \in L$, and $L'$ be the language made of all
  sequences of $L$ and their suffixes. We have $\I' \subseteq
  \preC_L(\S) \subseteq \preC_{L'}(\S)$. For every $C \in \I'$, let
  $f(C) \defeq |\sigma_C|$ where $\sigma_C \in L'$ is a shortest
  sequence such that:
  $$C \trans{\sigma_C} D \text{ for some } D \in \S.$$ Since $\I'$ is
  inductive and since $L'$ is closed under suffixes, if $\sigma_C = t
  \tau$ and $C \trans{t} C'$, then we have $f(C) = f(C') + 1$. Hence,
  $(f, 1)$ is a bounded certificate for $\I' \leadsto \S$.

  It remains to construct a \Epres{} formula $\varphi(C, \ell)$ that
  holds if{}f $\ell = f(C)$. We only consider the case where $L = w^*$
  for some finite sequence $w$; the generalization to $L = w_1^* w_2^*
  \cdots w_k^*$ being straightforward.

  A simple induction on the length of $w$ shows that the set of configurations that
  enable $w$ has a unique minimal configuration $C_w$. Further, also by induction
  on $w$ there exists a vector $\effect{w}
  \in \Z^{Q}$ such that $C \trans{w} C + \effect{w}$ for every $C \geq
  C_w$. More precisely, $\effect{w} = \sum_{i=1}^{|w|}
  \effect{w_i}$. It follows that a configuration $C$ enables sequence
  $w^k$ if{}f $C + i \cdot \effect{w} \geq C_w$ for every $0 \leq i <
  k$. Furthermore, if $C$ enables $w^k$, then we have: $$C \trans{w^k}
  C + k \cdot \effect{w}.$$ This condition is expressible by a
  Presburger formula $\enab_w(C, k)$. Let $\textit{suff}(w)$ denote
  the set of sufffixes of $w$, and let
  \begin{alignat*}{3}
    \textit{FS}(C, \ell)
    &\defeq
    &\bigvee_{w' \in \textit{suff}(w)} \enab_{w'}(C, 1) \wedge \exists k
    \colon & \enab_{w}(C + \effect{w'}, k)\ & \wedge \\[-12pt]
    &&& C + \effect{w'} + k \cdot \effect{w} \in \S\ & \wedge \\[0pt]
    &&& \ell = |w'| + k.
  \end{alignat*}
  Formula $\textit{FS}(C, \ell)$ holds if{}f there exists a sequence
  $\sigma \in L'$ of length $\ell$ such that $C \trans{\sigma} C'$ and
  $C' \in \S$. Let
  $$
  \varphi(C, \ell) \defeq \textit{FS}(C, \ell) \wedge \forall \ell'
  \colon \textit{FS}(C, \ell') \rightarrow (\ell' \geq \ell).
  $$ Formula $\varphi(C, \ell)$ holds if{}f $\ell$ is the length of a
  shortest sequence $\sigma \in L'$ such that $C \trans{\sigma}C'$ and
  $C' \in \S$, as desired. \qed
\end{proof}

\thmStageGraphDec*

\begin{proof}
  First we observe that for any two configurations $C,C'$, we have $C \trans{} C'$ if{}f there exists a transition $t$
  such that $C \ge \pre{t}$ and $C' = C + \effect{t}$.
  With that, checking the inductiveness of a \Epres{} stage $\Stage$ reduces to checking satisfiability of this sentence:
  $$
  \forall C \colon C \in \Stage \rightarrow \bigwedge_{t \in T} \left( C \ge \pre{t} \rightarrow \left( C + \effect{t} \right) \in \Stage \right).
  $$
  Checking whether $\sem{\phipre}$ is included in the union of all stages reduces to checking satisfiability of this sentence:
  $$
  \forall C \colon \phipre(C) \rightarrow \bigvee_{\mathclap{\text{stage}~\Stage}} C \in \Stage.
  $$
  Let $\Stage$ be a terminal stage of the graph. Checking that $\Stage \models \phipost^i$ for some $i$ reduces to checking
  satisfiablity of this sentence:
  $$
  \bigvee_{i=1}^k \forall C: C \in \Stage \rightarrow \phipost^i(C).
  $$
  Let $(f, k)$ be a Presburger certificate and $\varphi(\mathbf{x}, y)$ be the existential Presburger formula given for $f$.
  Checking that $\varphi$ actually describes some function $f$ reduces to checking satisfiability of this sentence:
  $$
  \left( \forall C \colon \exists y \colon \varphi(C,y) \right) \land
  \left( \forall C, y, y' \colon \left( \varphi(C,y) \land \varphi(C,y') \right) \rightarrow y = y' \right).
  $$
  Since we have the silent transition $(\multiset{}, \multiset{}) \in T$ that is always enabled, it suffices to check $(f,k)$ for sequences of length exactly $k$ instead of at most $k$.
  We now obtain that $(f, k)$ is a Presburger certificate for $\Stage \leadsto (\Stage_1 \cup \ldots \cup \Stage_n)$
   if{}f the following sentence is satisfiable:
  \begin{align*}
  \forall C_0, y \colon \exists C_1,\ldots,C_k,y' \colon
  \left[\varphi(C,y) \land C \in \Stage \land \neg \bigvee_{i=1}^n C \in \Stage_i \right] \rightarrow \\
  \left[
      \varphi(C_k,y') \land y > y' \land
      \bigwedge_{i=0}^{k-1} \bigvee_{t \in T} \left( C_i \ge \pre{t} \land C_{i+1} = C_i + \effect{t} \right)
  \right].
  \end{align*}

  Determining if the given graph is a Presburger stage for $\Pi$ now amounts to
  determining satisfiability of the conjunction of all the constructed Presburger sentences.
  We note that if all Presburger formulas for stages and
  certificates are quantifier-free and the bounds $k$ on the certificates are
  given in unary, then the constructed Presburger sentences are of polynomial
  size and in the $\forall\exists$ fragment. \qed
\end{proof}

\subsection[Missing proofs for Section 4]{Missing proofs for \Cref{sec:stagegraph_construction}}\label{app:stagegraph_construction}

\eventdeadhardness*
\begin{proof}
  \newcommand{\init}[1]{{#1}_{\text{init}}}
\newcommand{\target}[1]{{#1}_{\text{tgt}}}
\newcommand{\clean}[1]{{#1}_{\text{clean}}}
\newcommand{\free}[1]{{#1}_{\text{free}}}
\newcommand{\out}[1]{{#1}_{\text{out}}}

Let us first establish decidability. It suffices to show decidability
of $\S \not\models \Diamond \dead{U}$, \ie, whether $C \not\models
\Diamond \dead{U}$ for some $C \in \S$. Observe that
\begin{align*}
  C \not\models \Diamond \dead{U}
  &\iff C \not\models \Diamond \bigwedge_{t \in U} \dead{t} \\
  &\iff C \models \Box \bigvee_{t \in U} \neg\dead{t}.
\end{align*}
In other words, $C \not\models \Diamond \dead{U}$ holds iff at every
configuration $C'$ reachable from $C$, some transition from $U$ is
enabled at $C'$. It has been observed in~\cite[Sect.~4]{JP19} that
this notion of liveness is equivalent to liveness of a \emph{single}
transition. Indeed, it suffices to introduce a new transition
$t^\dagger$ and two new states $p^\dagger, q^\dagger$ such that:
\begin{itemize}
\item a single agent is initially in state $p^\dagger$, while none is
  in state $q^\dagger$;

\item each transition from $U$ is altered so that it takes an agent in
  state $p^\dagger$ and moves it to state $q^\dagger$;

\item $t^\dagger$ moves an agent in state $q^\dagger$ to state
  $p^\dagger$.
\end{itemize}
This way, $C \models \Box \bigvee_{t \in U} \neg\dead{t}$ holds in the
original system iff $C \models \Box \neg\dead{t^\dagger}$ holds in the
altered system. Moreover, the alteration of $\S$ remains semilinear as
one can simply add the conjunct $(p^\dagger = 1 \land q^\dagger = 0)$
to $\phi$.

By~\cite[Thm.~2]{JP19}, the problem of determining whether $\C \models
\Box \neg\dead{t^\dagger}$ holds for some $\C \in \S$, is
decidable. Although the statement of \cite[Thm.~2]{JP19} only covers
the specific case of $\S = \N^Q$, its proof explicitly handles any
effective semilinear set $\S$. Therefore, this establishes
decidability of our problem.\bigskip

Let us now show \PSPACE-hardness. Observe that replicated systems are
Petri nets where states correspond to places and where transitions
correspond to transitions and arcs. In fact, replicated systems amount
\emph{precisely} to the class of \emph{1-conservative} Petri nets,
\ie, where every transition produces as many tokens as it
consumes. Since the reachability problem for 1-conservative Petri nets
is \PSPACE-complete~\cite{JLL77}, the same holds for the reachability
problem for replicated systems defined as:

\begin{quote}
  \begin{tabular}{lp{9.5cm}}
    \textsc{Input:} &
    a replicated system $\PP = (Q, T)$ and configurations $C, C' \in
    \N^Q$, \\[2pt]
    
    \textsc{Output:} &
    does $C \trans{*} C'$ hold?
  \end{tabular}
\end{quote}
We give a (many-one) reduction from this problem to the following
variant of the \emph{partial structural liveness problem} for
replicated systems:

\begin{quote}
  \begin{tabular}{lp{9.5cm}}
    \textsc{Input:} &
    a replicated system $\PP = (Q, T)$ and a transition $t \in T$, \\[2pt]

    \textsc{Output:} &
    does $\N^Q \not\models \Diamond \dead{t}$ hold?
  \end{tabular}
\end{quote}

Let us fix a replicated system $\PP = (Q, T)$ and configurations
$\init{C}, \target{C} \in \N^Q$. We design a replicated system $\PP' =
(Q', T')$ and a transition $\target{t} \in T$ such that:
$$\init{C} \trans{*} \target{C} \text{ in } \PP \iff \N^{Q'}
\not\models \Diamond \dead{\target{t}} \text{ in } \PP'.$$ The
validity of this equivalence proves the proposition as it is a special
case of the problem we wish to show \PSPACE-hard. Note that we are
implicitly using the fact that $\PSPACE = \NPSPACE$ as we deal with
``$\not\models$'' instead of ``$\models$''.

\paragraph*{Construction.}

Let us describe $\PP'$. Its set of states is defined as $$Q' \defeq Q
\cup \{\free{q}, \out{q}\},$$ where $\free{q}$ indicates that a
``retired'' agent is ``free'' to move to a state of $Q$, and where
$\out{q}$ indicates that a ``retired'' agent is permanently retired.

Let $k \defeq |\init{C}|$. Let $\init{t} \defeq \multiset{k \cdot
  \free{q}} \mapsto \init{C}$ and let $\target{t} \defeq \target{C}
\mapsto \target{C}$. The purpose of these transitions is respectively
to generate the initial configuration $\init{C}$, and to check whether
the target configuration $\target{C}$ is present. Let $$\clean{t}
\defeq \multiset{k \cdot \free{q}, \free{q}} \mapsto \multiset{k \cdot
  \free{q}, \out{q}}.$$ The purpose of transition $\clean{t}$ is to
permanently retire agents until there are at most $k$ remaining. Let
$S \defeq \{s_q : q \in Q\}$ where $s_q \defeq \multiset{q} \mapsto
\multiset{\free{q}}$. Transitions $S$ make the system ``lossy'' in the
sense that agents can non deterministically retire from $Q$, either
temporarily or eventually permanently. Overall, the set of transitions
of $\PP'$ is defined as $$T' \defeq T \cup S \cup \{\clean{t},
\init{t}, \target{t}\}.$$

\begin{figure}[!h]
  \centering
  \begin{tikzpicture}[auto, node distance=1.25cm, thick, transform shape]
    \draw[fill=orange!20] (-1.25, -0.5) rectangle (4.5cm, 2cm);

    \node[place]              (q0) {};
    \node[place, right of=q0] (q1) {};
    \node[       right of=q1] (qi) {$\cdots$};
    \node[place, right of=qi] (qn) {};

    \node[left=1pt of q0] {$Q \colon$};

    \node[transition, below of=q0] (s0) {};
    \node[transition, below of=q1] (s1) {};
    \node[            below of=qi] (si) {$\cdots$};
    \node[transition, below of=qn] (sn) {};

    \node[left=1pt of s0] {$S \colon$};

    \path[->]
    (q0) edge node {} (s0)
    (q1) edge node {} (s1)
    (qn) edge node {} (sn)
    ;

    \node[place, below right=60pt and 10pt of q1, label=$\free{q}$] (qfree) {};
    \node[transition, below of=qfree,  label=right:$\clean{t}$]    (qclean) {};
    \node[place,      below=10pt of qclean, label=right:$\out{q}$] (qout)   {};

    \path[->]
    (s0) edge[out=-90, in=170] node {} (qfree)
    (s1) edge[out=-90, in=140] node {} (qfree)
    (sn) edge[out=-90, in=15]  node {} (qfree)
    
    (qfree)  edge[bend right] node[swap, yshift=8pt] {$k + 1$} (qclean)
    (qclean) edge[bend right] node[swap, yshift=8pt] {$k$}     (qfree)

    (qclean) edge node {} (qout)
    ;

    \node[above=3.15cm of qfree] (PP) {};
    \node at (-0.85, 1.75) {$\PP$};
    
    \node[transition, above right=15pt and of qn,
      label=right:$\init{t}$] (tinit) {};
    
    \path[->]
    (qfree) edge[out=0, in=-90] node[swap] {$k$} (tinit)
    ;

    \path[->, dashed]
    (tinit) edge node[swap] {$\init{C}$} ($(PP)+(1,0)$)
    ;

    \node[transition, above=1.1cm of PP,
      label=right:$\target{t}$] (ttarget) {};
    
    \path[<->, dashed]
    (ttarget) edge node[swap] {$\target{C}$} ($(PP)+(0,0.25)$)
    ;
  \end{tikzpicture}
  \caption{Replicated system $\PP'$ depicted as a (1-conservative) Petri net.}
  \label{fig:pspace:pn}
\end{figure}

\paragraph{General idea.}

Let us explain the idea behind the construction of $\PP'$, which is
illustrated in \Cref{fig:pspace:pn}. If $\target{C}$ is reachable from
$\init{C}$ in $\PP$, then this is also the case in $\PP'$, as it can
simulate the former. Moreover, if $\PP'$ gets stuck by choosing the
wrong transitions, then this does not yield a dead configuration, as
lossy transitions $S$ can temporarily retire agents so that $\init{t}$
resets the system to $\init{C}$. This way, transition $\target{t}$ can
occur infinitely often from any reachable configuration.

Since $\PP'$ could in principle start from a configuration that
differs from $\init{C}$, there is a risk that $\target{t}$ occurs
infinitely often even though $\init{C} \trans{*} \target{C}$ does
\emph{not} hold in $\PP$. Thus, the role of $\clean{t}$ is to
permanently retire agents until at most $k$ can move to $Q$. This
ensures that $\PP'$ eventually either sets its non retired agents to
$\init{C}$, or gets stuck. The latter only happens if there are
\emph{less} that $k$ agents.

\paragraph*{Proof of the reduction.} Let us prove the claim formally.

\smallskip\noindent$\Rightarrow$) Assume $\init{C} \trans{*}
\target{C}$ holds in $\PP$. Let us show that $\init{C} \not\models
\Diamond \dead{\target{t}}$ in $\PP'$. Let $D$ be some configuration
of $\PP'$ such that $\init{C} \trans{*} D$. We must show that $D$ can
reach $\target{C}$, which allows $\target{t}$ to occur. Since $D$ is
arbitrary, the validity of this claim implies that $\target{t}$ is not
dead at any reachable configuration.

Observe that $\clean{t}$ cannot occur at any reachable configuration
as there are $k$ agents, while $\clean{t}$ requires $k + 1$ agents. By
definition of $S$ and since $D(Q' \setminus \{\out{q}\}) = k$, we have
$D \trans{*} \multiset{k \cdot \free{q}}$. Since $\init{t}$ can occur
from the latter, we have $D \trans{*} \init{C}$. As $\PP'$ contains
all transitions from $\PP$, this implies that $D \trans{*} \target{C}$
in $\PP'$. Hence, $\target{t}$ can occur from there.

\smallskip\noindent$\Leftarrow$) Assume $\init{C} \trans{*}
\target{C}$ does \emph{not} hold in $\PP$. Let us show that $\N^{Q'}
\models \Diamond \dead{\target{t}}$ in $\PP'$. Let $\init{D} \in
\N^{Q'}$. We must argue ``adversarially'' that $\init{D}$ can reach a
configuration $D$ at which $\target{t}$ is dead.

By using ``lossy'' transitions $S$ repeatedly, we can remove all
agents from $Q$. Hence, $\init{D} \trans{*} \multiset{a \cdot
  \free{q}, b \cdot \out{q}}$ for some $a, b \in \N$. If $a > k$, then
using transition $\clean{t}$ repeatedly, we obtain $k$ agents in state
$\free{q}$ and $b + (a - k)$ agents in state $\out{q}$. In other
words, we have
$$\init{D} \trans{*} \multiset{a' \cdot \free{q}, b' \cdot \out{q}}
\text{ where } a' \leq k \text{ and } b' \in \N.$$

If $a' < k$, then \emph{all} transitions are dead and we are
done. Hence, let us assume that $a' = k$. The only enabled transition
at this point is $\init{t}$, which forces $\PP'$ to move to
configuration $D \defeq \init{C} + \multiset{b' \cdot \out{q}}$. Note
that $\clean{t}$ is dead at $D$. Since $\target{C}$ is not reachable
from $\init{C}$ in $\PP$, system $\PP'$ cannot reach $\target{C} +
\multiset{b' \cdot \out{q}}$ either. Moreover, it cannot reach any
configuration larger than $\target{C} + \multiset{b' \cdot \out{q}}$
as the number of agents never changes. Thus, $\target{t}$ is dead at
$D$, which completes the proof.\qed

\end{proof}

\subsection[Missing proofs for Section 5]{Missing proofs for \Cref{{sec:setconfdead}}}\label{app:setconfdead}

\postpresburger*
\begin{proof}
    We use the fact that $C \trans{t} C'$ if{}f $C \in \overline{\sem{\disa{t}}}$ and $C' = C + \effect{t}$, for every $C, C' \in \N^Q$ and $t \in T$.
    Then $C \in \preC_U(\C)$ holds if{}f the following holds:
\begin{align*}
  & \exists C' \in \C :
  \bigvee_{t \in U} \left( C \in \overline{\sem{\disa{t}}} \land C' = C + \effect{t} \right) \\
  \equiv&\
  \bigvee_{t \in U} \left( \left( C + \effect{t} \right) \in \C \land C \in \overline{\sem{\disa{t}}} \right)
  .
\end{align*}
    Similarly, $C' \in \postC_U(\C)$ holds if{}f the following holds:
\begin{align*}
  & \exists C \in \C :
  \bigvee_{t \in U} \left( C \in \overline{\sem{\disa{t}}} \land C' = C + \effect{t} \right) \\
  \equiv&\
  \bigvee_{t \in U} \left( \left( C' - \effect{t} \right) \in \C \land \left( C' - \effect{t} \right) \in \overline{\sem{\disa{t}}} \right)
  .
\end{align*}
\end{proof}

\deathcert*
\begin{proof}
    Let $\C^\omega = \{C^\omega_1, \ldots, C^\omega_k\}$.
    By definition, we have that $\C^\omega$ is a death certificate for $U$ if{}f
    $\Down{\C^\omega} \models \disa{U}$ and $\postC_T(\Down{\C^\omega}) \subseteq \Down{\C^\omega}$.
    We easily have
    \begin{align*}
        \Down{\C^\omega} \models \disa{U}
        \equiv
        \bigwedge_{i=1}^k \bigwedge_{u \in U} \neg \left( C^\omega_i \ge \pre{u} \right).
    \end{align*}
    Using the constraint from the proof of \Cref{prop:post-presburger}, we rewrite the inductivity constraint as follows:
    \begin{align*}
        &\ \postC_T(\Down{\C^\omega}) \subseteq \Down{\C^\omega} \\
        \equiv&\ \forall C' : C' \in \postC_T(\Down{\C^\omega}) \Rightarrow C' \in \Down{\C^\omega} \\
        \equiv&\ \forall C' : \left( \bigvee_{t \in T} \left( \left( C' - \effect{t} \right) \in \Down{\C^\omega} \land \left( C' - \effect{t} \right) \in \overline{\sem{\disa{t}}} \right) \right) \Rightarrow C' \in \Down{\C^\omega} \\
        \equiv&\ \forall C : \bigwedge_{t \in T} \left( \left( C \in \Down{\C^\omega} \land C \in \overline{\sem{\disa{t}}} \right) \Rightarrow \left( C + \effect{t} \right) \in \Down{\C^\omega} \right)
    \end{align*}
    As $\Down{\C^\omega}$ is downward closed, it suffices to check the constraint for all elements in the decomposition of $\Down{\C^\omega}$, i.e., $C^\omega_1$ to $C^\omega_k$.
    This gives us the following formula:
    \begin{align*}
        &\ \bigwedge_{i=1}^k \bigwedge_{t \in T} \left( C^\omega_i \in \overline{\sem{\disa{t}}} \Rightarrow \left( C^\omega_i + \effect{t} \right) \in \Down{\C^\omega} \right) \\
        \equiv&\ \bigwedge_{i=1}^k \bigwedge_{t \in T} \left( C^\omega_i \ge \pre{t} \Rightarrow \bigvee_{j=1}^k \left( C^\omega_i + \effect{t} \right) \le C^\omega_j \right).
    \end{align*}
    Together we obtain the following Presburger formula for $\DeathCert_k(U, \C^\omega)$:
    \begin{align*}
        \left( \bigwedge_{i=1}^k \bigwedge_{u \in U} \neg \left( C^\omega_i \ge \pre{u} \right) \right) \land
       \left( \bigwedge_{i=1}^k \bigwedge_{t \in T} \left( C^\omega_i \ge \pre{t} \Rightarrow \bigvee_{j=1}^k \left( C^\omega_i + \effect{t} \right) \le C^\omega_j \right) \right).
    \end{align*}
    For a fixed $k$, the size of the formula is polynomial w.r.t. the size of the system.
\end{proof}

\subsection[Missing example for Section 6]{Missing example for \Cref{sec:split}}\label{app:split}

\begin{example}
Let $\calP = (Q, T)$ be the replicated system where $Q = \{a_1, \ldots, a_n \} \cup \{b_1, \ldots, b_n\} \cup \{ c \}$ and
$T = U \cup \{ t_c \colon c \tr c \}$ with $U = \{ t_i \colon a_i \, b_i \tr a_{i+1} \, b_{i + 1} \mid 1 \le i < n \} \cup \{ t_n \colon a_n \, b_n \tr a_1 \, b_1 \}$.
Let $\Stage$ be the set of all configurations $C$ where either $C(c) = 0$ or $C(a_i) = C(b_i) = 0$ for all $i$. It is easy to see that no
transition is dead at \emph{every} configuration of $\Stage$, i.e., $\Dead(\Stage) = \emptyset$, but every configuration of $\Stage$ has at least one dead transition:
either $C(c) = 0$ and $t_c$ is dead, or $C(c) > 0$ and all $t_i \in U$ are dead.

Consider the $\omega$-configurations $C^\omega$ and $D^\omega$ defined as follows:
\begin{align*}
    C^\omega  (q) &\defeq \begin{cases} \omega & \text{if $q = c$,    } \\ 0 & \text{otherwise}, \end{cases} &
    D^\omega  (q) &\defeq \begin{cases} \omega & \text{if $q \neq c$, } \\ 0 & \text{otherwise}. \end{cases}
\end{align*}
$C^\omega$ is a death certificate for $U$, and $D^\omega$ is a death  certificate for $\{t_c\}$.
So the pairs $(\multiset{c}, C^\omega)$ and $(\multiset{a_1,\ldots,a_n,b_1,\ldots,b_n}, D^\omega)$ satisfy \ref{itm:split1}--\ref{itm:split3}. It is easy to see that they also satisfy
\ref{itm:split4} and \ref{itm:split5}, and that the only split that can be returned by the procedure is $\{ \Stage \cap \Down C^\omega, \Stage \cap \Down D^\omega \}$. So $\Stage$ is split into only two parts.

We now show that, if condition \ref{itm:split4} or \ref{itm:split5} is dropped, then the splitting procedure might return splits of cardinality $2^n+1$.

Let $\mathcal{M} \defeq \{ C \in \N^Q \mid C(c) = 0 \wedge  \forall \, 1 \le i \le n \colon C(a_i) + C(b_i) = 1 \} \subseteq \Stage$ and, for each $X \in \mathcal{M}$, define the $\omega$-configurations $C^\omega_X, D^\omega_X$ as follows:
\begin{align*}
C^\omega_X(q) &\defeq \begin{cases} \omega & \text{if $q = c$ or $X(q) > 0$,} \\ 0 & \text{otherwise}, \end{cases} &
D^\omega_X(q) &\defeq \begin{cases} \omega & \text{if $X(q) > 0$,           } \\ 0 & \text{otherwise}. \end{cases}
\end{align*}
$C^\omega_X$ is a death certificate for $U$, and $D^\omega_X$ is a death certificate for $\{t_c\} \cup U$. So for every $X \in \mathcal{M}$ the pairs $(X, C^\omega_X)$ and $(X, D^\omega_X)$ satisfy \ref{itm:split1}--\ref{itm:split3}. Since we have $C^\omega \le C^\omega_X$, $D^\omega_X \le C^\omega_X$ and $D^\omega_X \le D^\omega$ for every $X \in \mathcal{M}$, and otherwise the death certificates are pairwise incomparable, condition \ref{itm:split4} is satisfied by all the pairs $(X, D^\omega_X)$, but it is not satisfied by any of the pairs $(X, C^\omega_X)$. It follows that if we drop condition \ref{itm:split4} (removing the reference to \ref{itm:split4} in \ref{itm:split5}), the splitting procedure might find the split $\{ S \cap \Down C^\omega_X \mid X \in \mathcal{M} \} \cup \{\Stage \cap \Down D^\omega\}$.
Without condition \ref{itm:split5}, but with \ref{itm:split4}, it might find the split $\{ S \cap \Down D^\omega_X \mid X \in \mathcal{M} \} \cup \{\Stage \cap \Down D^\omega\}$.
Both splits have $2^n+1$ elements. \defqed
\end{example}

\subsection[Missing proofs for Section 7]{Missing proofs for \Cref{sec:evdead}}

\layerdisabled*
\begin{proof}
We have that $\sem{\disa{U}} = \sem{\dead{U}}$ if{}f $\sem{\disa{U}}$ is inductive, that is
$\postC_T(\sem{\disa{U}}) \subseteq \sem{\disa{U}}$.
We show that
\begin{align*}
\postC_T(\sem{\disa{U}}) \subseteq \sem{\disa{U}}
\equiv
\bigwedge_{t \in T} \bigwedge_{u \in U} \bigvee_{u' \in U} \pre{t} + (\pre{u} \mminus \post{t}) \ge \pre{u'}.
\end{align*}
We start by rewriting the formula as follows:
\begin{align*}
    &\ \postC_T(\sem{\disa{U}}) \subseteq \sem{\disa{U}} \\[3pt]
    \equiv&\ \forall C' : C' \in \postC_T(\sem{\disa{U}}) \Rightarrow C' \in \sem{\disa{U}} \\
    \equiv&\ \forall C' : \left( \bigvee_{t \in T} \left( \left( C' - \effect{t} \right) \in \sem{\disa{U}} \land \left( C' - \effect{t} \right) \in \overline{\sem{\disa{t}}} \right) \right) \Rightarrow C' \in \sem{\disa{U}} \\
    \equiv&\ \forall C : \bigwedge_{t \in T} \left( \left( C \in \sem{\disa{U}} \land C \in \overline{\sem{\disa{t}}} \right) \Rightarrow \left( C + \effect{t} \right) \in \sem{\disa{U}} \right) \\
    \equiv&\ \forall C : \bigwedge_{t \in T} \left( \left( C \in \overline{\sem{\disa{t}}} \land \left(C + \effect{t} \right) \in \overline{\sem{\disa{U}}} \right) \Rightarrow C \in \overline{\sem{\disa{U}}} \right).
\end{align*}
Let $\mathcal{Y}(U,t) \defeq \{ C \mid C \in \overline{\sem{\disa{t}}} \land (C + \effect{t}) \in \overline{\sem{\disa{U}}} \}$. The above formula can be rewritten as:
\begin{align*}
    \forall C : \bigwedge_{t \in T} \left(
    C \in \mathcal{Y}(U,t) \Rightarrow C \in \overline{\sem{\disa{U}}} \right)\
    &\equiv \bigwedge_{t \in T} \mathcal{Y}(U,t)
    \subseteq \overline{\sem{\disa{U}}}.
\end{align*}
Observe that $\mathcal{Y}(U,t)$ is upward closed, as both $\overline{\sem{\disa{t}}}$ and $\overline{\sem{\disa{U}}}$ are upward closed.
Therefore, the inclusion check is between two upward closed sets, which amounts to a comparison of their bases.
We claim that $\Up \mathcal{X}(U,t) = \mathcal{Y}(U,t)$ where $$\mathcal{X}(U,t) \defeq \set{ \pre{t} + (\pre{u} \mminus \post{t}) \mid u \in U }.$$

Let us prove the claim. Let $C = \pre{t} + (\pre{u} \mminus \post{t})
\in \mathcal{X}(U,t)$ for some $u \in U$. Clearly, $C
\in \overline{\sem{\disa{t}}}$ since $C \ge \pre{t}$. Note that $C + \effect{t} =
\post{t} + (\pre{u} \mminus \post{t}) \geq \pre{u}$. Therefore, $C + \effect{t}
\in \overline{\sem{\disa{U}}}$ and consequently $C \in \mathcal{Y}(U,t)$. Since this
also holds for any configuration $C' \ge C$, we obtain $\Up \mathcal{X}(U,t)
\subseteq \mathcal{Y}(U,t)$.

For the other inclusion, let $C \in \mathcal{Y}(U,t)$.
We have $C + \effect{t} \ge \pre{u}$ for some $u \in U$ and hence
follows $\mathcal{Y}(U,t) \subseteq \Up \mathcal{X}(U,t)$ by
$$
C
= C + \effect{t} - \effect{t} \ge \pre{u} + \pre{t} - \post{t}
\ge \pre{t} + (\pre{u} \mminus \post{t})
\in \mathcal{X}(U,t).
$$
We now get the following final formula:
\begin{align*}
      \bigwedge_{t \in T} \bigwedge_{C \in \mathcal{X}(U,t)} \hspace{-8pt}C \in \overline{\sem{\disa{U}}}
      &\equiv \bigwedge_{t \in T} \bigwedge_{u \in U} \bigvee_{u' \in U} \pre{t} + (\pre{u} \mminus \post{t}) \ge \pre{u'}.
\end{align*}
\end{proof}

\end{document}